\newtheorem{proposition}{Proposition}
\newtheorem{assumption}{Assumption}
\newtheorem{example}{Example}
\definecolor{lightgray}{gray}{0.9}
\newcommand{\ba}{\mathbf{a}}
\newcommand{\bA}{\mathbf{A}}
\newcommand{\bB}{\mathbf{B}}
\newcommand{\bD}{\mathbf{D}}
\newcommand{\be}{\mathbf{e}}
\newcommand{\bE}{\mathbf{E}}
\newcommand{\bL}{\mathbf{L}}
\newcommand{\bO}{\mathbf{O}}
\newcommand{\bP}{\mathbf{P}}
\newcommand{\bQ}{\mathbf{Q}}
\newcommand{\bR}{\mathbf{R}}
\newcommand{\bT}{\mathbf{T}}
\newcommand{\bu}{\mathbf{u}}
\newcommand{\bv}{\mathbf{v}}
\newcommand{\by}{\mathbf{y}}
\newcommand{\bZ}{\mathbf{Z}}
\renewcommand{\epsilon}{\varepsilon}
\renewcommand{\tilde}{\widetilde}
\renewcommand{\leq}{\leqslant}
\renewcommand{\geq}{\geqslant}
\newcommand{\distn}[1]{\mathcal{#1}}
\newcommand{\Em}{\mathbb E}
\newcommand{\gvn}{\,|\,}
\newcommand{\vect}[1]{\boldsymbol #1}
\newcommand{\vLambda}{\vect{\Lambda}}
\newcommand{\vSigma}{\vect{\Sigma}}
\begin{document}

\title{Large Structural VARs with Multiple Sign and Ranking Restrictions}

\author{Joshua C. C. Chan\\
	{\small Purdue University} \\
	\and Christian Matthes \\
	{\small Indiana University} \\
	\and Xuewen Yu\\
	{\small Fudan University}
}

\date{March 2025}

\maketitle

\onehalfspacing

\thispagestyle{empty}

\begin{abstract}

\noindent Large VARs are increasingly used in structural analysis as a unified framework to study the impacts of multiple structural shocks simultaneously. However, the concurrent identification of multiple shocks using sign and ranking restrictions poses significant practical challenges to the point where existing algorithms cannot be used with such large VARs. To address this, we introduce a new numerically efficient algorithm that facilitates the estimation of impulse responses and related measures in large structural VARs identified with a large number of structural restrictions on impulse responses. The methodology is illustrated using a 35-variable VAR with over 100 sign and ranking restrictions to identify 8 structural shocks.

\bigskip

\noindent Keywords: large vector autoregression, sign restriction, ranking restriction, shrinkage prior

\bigskip

\noindent JEL classifications: C11, C55, E50

\end{abstract}

\newpage

\section{Introduction}

Vector autoregressions (VARs) are a workhorse model in macroeconomic forecasting and structural analysis. Among the many methodological advances in the structural VAR (SVAR) literature since the pioneering work by \citet{sims80}, two recent developments are the most prominent. First, there is a growing recognition of the need to exploit more information in structural analyses, motivated by the concern that informational deficiency (using an information set that is too small relative to that of economic agents) substantially distorts estimates of impulse responses and related objects \citep{HS91,LR93, LR94}. Starting from the seminal paper by \citet{LSZ96} that develops various medium-sized structural VARs to study the effects of monetary policy, large VARs with dozens of endogenous variables are increasingly being used in applications. This trend gained momentum after the influential work by \citet*{BGR10}, who demonstrate the benefits of including a large number of variables for both forecasting and structural analysis. Notable applications using large VARs include \citet*{CKM09}, \citet{koop13}, \citet{ER17} and \citet{Crumpetal21}.

The second development relates to the methods for identifying structural shocks. More specifically, there has been a gradual departure from conventional recursive or zero restrictions to alternative structural restrictions that are deemed to be more credible. An important class of identifying restrictions imposes sign restrictions motivated by economic theory, developed in a series of papers by \citet{Faust98}, \citet{CD02} and \citet{Uhlig05}. Extensions of this identification approach, such as ranking restrictions proposed in \citet{AD21}, are also widely used in empirical work.

The convergence of these two developments naturally requires the estimation of large structural VARs identified by imposing sign and ranking restrictions on the impulse responses. However, this remains practically infeasible in high-dimensional settings. For instance, using the popular accept-reject algorithm of \citet{RWZ10} to impose sign restrictions might take days in larger-scale applications. 
Thus, this computational burden severely limits the use of these more credible restrictions in large systems. % This computational burden thus severely limits the scope of empirical studies that aim to disentangle the impacts of multiple structural shocks using more credible restrictions.

We develop a new approach to estimate large SVARs identified using a large number of structural restrictions on impulse responses, which was until now computationally infeasible. In particular, it is applicable to situations where there are far more structural restrictions than identified shocks. The new algorithm builds upon the accept-reject algorithm of \citet{RWZ10}, which we now briefly describe to provide some perspective. First, given a uniformly drawn orthogonal matrix (i.e., a matrix drawn according to the Haar measure), \citet{RWZ10} check if the implied impulse responses satisfy all restrictions. If all the restrictions are satisfied (the draw is admissible), accept the draw and the implied impulse responses; otherwise, obtain another uniform draw and repeat the procedure. The main computational bottleneck of this algorithm comes from the fact that in high-dimensional settings with a large number of structural restrictions, it is highly unlikely that any orthogonal matrix drawn uniformly is admissible. Consequently, one typically needs to sample a huge number of orthogonal matrices to obtain one that is admissible. 

The key idea of our proposed algorithm comes from the recognition that, given a uniformly distributed orthogonal matrix, a vast collection of uniform draws can be constructed by permuting its columns and switching the signs of the columns.\footnote{Since the Haar measure is invariant under permutations and sign switches, any member of this collection is also uniformly distributed in the orthogonal group.} More importantly, all these obtained orthogonal matrices are equivalent, in the sense that they represent exactly the same structural shocks of the original orthogonal matrix, after relabeling the shocks and proper sign normalizations. Additionally, one can effectively search through this collection to locate any members that satisfy all structural restrictions with trivial computations. In this way, the new algorithm significantly increases the probability of obtaining an admissible draw with virtually no additional costs. In our benchmark setting, we impose that any identification restriction is only imposed on impact to allow for fast checking of identification restrictions. Economic theory generally only produces robust restrictions \emph{across theoretical models} only on impact, giving a justification for this approach. However, we also discuss how to extend our approach to sign restrictions at longer horizons as well as ranking restrictions along the lines of \cite{DGK14} and \cite{AD21}. 

To illustrate our proposed algorithm, we consider four applications, three empirical applications based on US data and one set of Monte Carlo simulations. First, we estimate a 15-variable VAR with more than 40 sign and ranking restrictions to identify 5 structural shocks based on empirical applications in \citet{FRS19} and \citet{chan22}. As a benchmark, we use the algorithm of \citet{RWZ10} to uniformly draw orthogonal matrices from the admissible set and compute implied impulse responses. It takes about 3.6 billion orthogonal matrices to obtain 1,000 admissible draws, and the estimation takes about 6 days on a standard desktop. In contrast, the new algorithm requires only about 31,000 orthogonal matrices to obtain 1,000 admissible draws, and the entire exercise takes about 16 seconds. We also confirm empirically that both algorithms give identical impulse responses. Second, we demonstrate how the proposed algorithm can be applied in settings with dynamic sign restrictions by replicating the classic application in \citet{Uhlig05}. We compare the proposed algorithm to those of \citet{RWZ10} and \citet{Read22}, and show that while the three algorithms produce the same impulse responses, the proposed algorithm is substantially faster. Third, we conduct a series of Monte Carlo simulations to illustrate the empirical performance of the proposed method, and show that it works well even in settings with large numbers of variables and structural shocks. 

Our third empirical application considers a larger 35-variable VAR with over 100 sign and ranking restrictions to identify 8 structural shocks: demand, investment, financial, monetary policy, government spending, technology, labor supply and wage bargaining. These macroeconomic and financial  variables are broadly similar to those of \citet{Crumpetal21} and are closely monitored by policy institutions and market participants. Our high-dimensional model provides a unified framework to study the impacts of multiple structural shocks simultaneously. In particular, this framework allows us to disentangle the impacts of different types of demand and supply shocks on key macroeconomic variables.  Even for such a large system, the estimation takes only 14 minutes. Therefore, this application demonstrates that it is practical to study the impacts of multiple structural shocks jointly in a large system using the proposed approach. % The proposed algorithm therefore allows empirical researchers... *** highlight a few results if they are interesting? *** 

Our paper contributes to the emerging literature on efficient methods for conducting structural analysis using large VARs. As noted in \citet{Crumpetal21}, central banks and policy institutions routinely monitor and forecast dozens of key macroeconomic variables, and VARs provide a convenient framework for studying the joint impacts of multiple structural shocks. To reduce the computational burden of performing structural analysis in large systems, some recent papers, such as \citet{korobilis22} and \citet{CEY22}, propose using a factor model for the reduced-form VAR errors and structural identification restrictions are placed on factor loadings. In contrast, our paper uses a standard VAR framework where structural shocks are related to the reduced-form errors through an impact matrix. Therefore, the proposed algorithm is directly applicable to a wide variety of VARs currently used for structural analysis. 

This paper also relates to the literature on efficient posterior sampling in structural VARs with informative priors on impulse responses \citep[see, e.g.,][]{Kocikecki10, BH15, BH18}. In particular, for VARs identified using sign restrictions, the proposed algorithm can be used in the first stage to generate proposal draws for an importance sampler to explore the posterior distribution that incorporates prior information on impulse responses; a recent example of such an importance sampler is given in \citet{BP23}. The proposed algorithm can thus boost the efficiency of the second-stage importance sampler and make it applicable beyond medium-sized models.

% Since a large number of proposal draws are typically required to guarantee sufficient efficiency of the importance sampler, 

% Can say something about how the proposed algorithm can be used in ohter settings, such as when posterior draws of the reduced-form model are very computationally intensive to obtain such as in \citet{Braun23}?

The remainder of this paper is organized as follows. Section \ref{s:id} first outlines the identification of shocks in a structural VAR using sign restrictions. We then introduce the proposed algorithm for generating uniform draws of the impact matrix that satisfy all the sign restrictions at impact. Finally, we discuss how the proposed algorithm can be extended to handle other commonly-used identification schemes. Section~\ref{s:comparison} considers an illustration using a 15-Variable VAR with sign restrictions to identify 5 structural shocks. We compare the speed of the proposed algorithm as well as the impulse response estimates with those obtained using the algorithm of \citet{RWZ10}. We further illustrate how the proposed algorithm can be applied in settings with dynamic sign restrictions by replicating the application in \citet{Uhlig05}. Section~\ref{s:application} considers an application that involves 35 US macroeconomic and financial variables. We use over 100 sign and ranking restrictions to identify 8 structural shocks. Lastly, Section~\ref{s:conclusion} concludes and outlines some future research directions.

% first introduces a general framework for modeling matrix-valued time-series with a flexible error covariance structure. It then offers a few different interpretations of the matrix autoregression and discusses some identification issues. Lastly, the section develops Bayesian shrinkage priors that generalize the Minnesota priors to the MAR setting. Section~\ref{s:estimation} proposes a unified approach to estimate these flexible MARs using Markov chain Monte Carlo (MCMC) methods.

 % It is an accept-reject algorithm, and often requires tens of millions of MCMC draws to get enough admissible draws. The problem is compounded when getting MCMC draws is time-consuming (in larger systems, more flexible models). 

% Many other applications also naturally call for the inclusion of a large number of time-series, such as those involving data in multiple frequencies \citep{SS15, MOS21}, disaggregated data \citep{GLMO14, ER17}, firm-level data \citep{DDLY18} and financial data \citep{CKM09}. Finally, the wide availability of large time-series datasets \citep{MN16,MN20,BLS22,BJKO23} further propels this development.

\section{Identification of Structural Shocks} \label{s:id}

In this section, we first outline the identification of structural shocks in a structural VAR using sign restrictions. In Section \ref{ss:new_algorithm} we then introduce the proposed algorithm to efficiently generate draws of the impact matrix that satisfy all the sign restrictions at impact. Section \ref{ss:extensions} further discusses how the proposed algorithm can be extended to handle other commonly used identification schemes, such as ranking restrictions.

To set the stage, let $\by_t= (y_{1,t},\ldots,y_{n,t})'$ be an $n\times 1$ vector of endogenous variables that is observed over the periods $t=1,\ldots, T.$ Consider the following VAR with $p$ lags:
\begin{align} 
	\by_t & = \ba_0 + \bA_1 \by_{t-1} + \cdots + \bA_p\by_{t-p} + \bu_t, \label{eq:VAR} \\
	\bu_t & = \bB_0\bv_t, \quad \bv_t\sim\distn{N}(\mathbf{0},\mathbf{I}_n),  \label{eq:shock} 
\end{align}
where the vector of structural shocks $\bv_t$ is related to the reduced-form errors $\bu_t$ via the impact matrix $\bB_0$ that is assumed to be non-singular. It follows that the covariance matrix of $\bu_t$ is $\vSigma\equiv \bB_0 \bB_0'$. 

One main goal of estimating the VAR in \eqref{eq:VAR}-\eqref{eq:shock} is to study the impact of structural shock $v_{j,t}$ on the endogenous variable $y_{i,t}$, $i=1,\ldots, n$ and $j=1,\ldots, m$. Specifically, the impulse response at horizon $h$ is defined to be the expected change in the conditional mean of $y_{i,t+h}$ from the $j$-th structural shock $v_{j,t}$:
\begin{equation}\label{eq:IR}
	f_{i,j,h} = \Em\left[y_{i,t+h} \gvn \bv_t = \be_j; \bB_0, \bA\right] - 
	\Em\left[y_{i,t+h}\gvn \bv_t = \mathbf{0}; \bB_0, \bA\right],
\end{equation}
where $\be_j$ is the $j$-th column of the $n$-dimensional identity matrix $\mathbf{I}_n$ and $\bA = (\ba_0, \bA_1, \ldots, \bA_p)'$ is the $k\times n$ matrix of VAR coefficients with $k=np+1$. Note that each impulse response $f_{i,j,h}$ depends implicitly on the impact matrix $\bB_0$ and the VAR coefficients $\bA$.

It is well known that under the setup in \eqref{eq:VAR}-\eqref{eq:shock}, $\bB_0$ is not point-identified: since given any orthogonal matrix $\bQ\in\bO(n)$ and $\tilde{\bB}_0 = \bB_0\bQ$, we have $\tilde{\bB}_0\tilde{\bB}_0' = \vSigma$. In other words, there is a range of impulse responses of variables to structural shocks, even if we fix the identifiable model parameters 
$(\bA,\vSigma)$. One often proceeds by restricting the set of impulse responses---e.g., by imposing economically meaningful restrictions on the impulse responses. Starting from the influential papers by \citet{Faust98}, \citet{CD02} and \citet{Uhlig05}, one prominent approach is to impose sign restrictions motivated by economic theory on the impulse responses. 

More specifically, let $s_{i,j,h}\in\{-1,0,1\}$. Then, a sign restriction on the impulse response $f_{i,j,h}$ can be written as
\begin{equation}
	s_{i,j,h} \times f_{i,j,h} \geq 0.
\end{equation}
For example, if $s_{i,j,h} = 1$, then this sign restriction implies that the $h$-step-ahead response of the $i$-th variable to the $j$-th structural shock is restricted to be non-negative. If $s_{i,j,h} = 0$, then the sign restriction is not imposed on this response. We define the sign restrictions set $\mathcal{S}$ to be the collection of $s_{i,j,h}$ for all $i,j,h$.

It is worth noting that in applications one typically imposes sign normalization restrictions on the $m$ structural shocks to facilitate their interpretation. These sign restrictions can be incorporated in our setup by including them in $\mathcal{S}$. For example, if one wishes to sign-normalize a monetary policy shock so that it is a \emph{contractionary} monetary policy shock, one could restrict the effect of the monetary policy shock (say, the $j$-th shock) on the policy rate (say, the $i$-th variable) to be non-negative on impact, i.e., $f_{i,j,0}\geq 0$. This can be done by setting $s_{i,j,0}  = +1.$

% A further extension of this identification scheme, called ranking restrictions, is considered in \citet{AD21}.

Now, we can formally define the admissible set with respect to the set of sign restrictions $\mathcal{S}$ and model parameters $(\bA,\vSigma)$:
\begin{align*}
	\mathcal{Q}(\bA,\vSigma,\mathcal{S}) = \{\bQ: \bQ  & \in \bO(n) \text{ and the impulse responses implied by } \bQ \text{ and } (\bA,\vSigma) \\ 
	& \text{ satisfy the restrictions in } \mathcal{S}\}.
\end{align*}

A popular algorithm to obtain draws uniformly from the admissible set $\mathcal{Q}(\bA,\vSigma,\mathcal{S})$ is given in \citet{RWZ10}. It is an accept-reject algorithm and is implemented as follows. First, obtain a draw $\bQ$ uniformly from the orthogonal group $\bO(n)$ (i.e., according to the Haar measure). Then, set $\bR = \bL\bQ$, where $\bL$ is the lower triangular Cholesky factor of $\vSigma$. If the impulse responses implied by $(\bA, \bR)$ satisfy all the restrictions in $\mathcal{S}$, then we accept $\bQ$ (it is easy to see that $\bQ\in \mathcal{Q}(\bA,\vSigma,\mathcal{S}))$; otherwise, we obtain another draw uniformly from $\bO(n)$ and repeat the procedure.

This algorithm is flexible and easy to implement and works well for a wide range of applications using small VARs. When the application requires a VAR that involves more than a dozen variables and restrictions, this algorithm tends to be computationally intensive, as it requires a large number of uniform draws from $\bO(n)$ to get each draw from the admissible set $\mathcal{Q}(\bA,\vSigma,\mathcal{S})$. When $n$ is large, this approach is simply computationally infeasible. 

% In addition, the computational cost of increases when $n$ increases. This is because obtaining a uniform draw from $\bO(n)$ requires $n^2$ standard normal random variables and a QR decomposition of an $n\times n$ matrix, which has computational complexity $\mathcal{O}(n^3)$. When $n$ is large, this approach is simply computationally infeasible. 

\subsection{A New Algorithm} \label{ss:new_algorithm}

For high-dimensional systems with a large number of sign restrictions, it is highly unlikely that any given uniform draw from $\bO(n)$, denoted as $\bQ\sim \distn{U}(\bO(n))$, would imply impulse responses that satisfy all the restrictions in $\mathcal{S}$. To make progress, we assume that $\mathcal{S} = \mathcal{S}_0$ where $\mathcal{S}_0$ collects sign restrictions that restrict only the signs of impulse responses at impact, i.e., $\mathcal{S}_0 = \{s_{i,j,0}: s_{i,j,0}\in\mathcal{S}\}.$ There are two key reasons to focus on the subset $\mathcal{S}_0$. First, there is often a strong consensus in economic theory about the signs of impulse responses at impact but not at longer horizons \citep[see, e.g.,][]{CP11}. Second, verifying the sign restrictions on impulse responses at impact is equivalent to verifying the signs of the elements in $\bR$, where $\bR=\bL\bQ$ and $\bL$ is the (lower triangular) Cholesky factor of $\vSigma$. As such, this verification can be done very quickly without computing impulse responses at all. Given a uniform draw $\bQ$, one can build a huge collection of equivalent draws (defined below) and search through this collection to obtain any members that satisfy all sign restrictions with trivial computations. 

More specifically, given $\bQ\sim \distn{U}(\bO(n))$, let $\mathcal{E}(\vSigma,\bQ)$ denote the set
\begin{align*}
	\mathcal{E}(\vSigma,\bQ) = \{ \bE&: \bE = \bL\bQ\bP\bD, \text{where } \bL \text{ is the Cholesky factor of } \vSigma, \bP \text{ is an } n\text{-dimensional } \\
	& \text{permutation matrix and } \bD \text{ is a diagonal matrix with elements } \pm 1 \}.
\end{align*}
In other words, $\mathcal{E}(\vSigma,\bQ)$ consists of all the permutations and sign switches of the columns of $\bL\bQ$. Since there are $n!$ permutation matrices of dimension $n$ and $2^n$ ways to construct an $n$ vector from the two values $\pm 1$, the cardinality of $\mathcal{E}(\vSigma,\bQ)$ is $2^n n!$.  

There are three key reasons to consider the set $\mathcal{E}(\vSigma,\bQ)$. First, since each column of $\bL\bQ$ can be viewed as the responses of the endogenous variables to a particular structural shock at impact, $\mathcal{E}(\vSigma,\bQ)$ includes all possible permutations and sign normalizations of the structural shocks represented by $\bQ$. That is, any member in $\mathcal{E}(\vSigma,\bQ)$ represents exactly the same structural shocks as $\bQ$---after relabeling the shocks and proper sign normalizations. Second, for any fixed $\bP$ or $\bD$ (respectively, a permutation matrix and a diagonal matrix with elements $\pm 1$), it is orthogonal. Therefore, the Haar measure is invariant under right multiplication of $\bP$ and $\bD$. Hence, $\bQ\bP\bD$ is a uniform draw from $\bO(n)$. Third, one can efficiently search through all the elements---all $2^n n!$ of them---in $\mathcal{E}(\vSigma,\bQ)$ to find those that satisfy all the restrictions in $\mathcal{S}_0$ (discussed below). Put differently, given each $\bQ\sim \distn{U}(\bO(n))$, we automatically obtain $2^n n!$ economically equivalent candidates with trivial additional computations. For $n=10$,  the number of orthogonal matrices that we sort through is about 3.7 billion. When $n=30$, the number is about $2.85\times 10^{41}$.

To distinguish two structural shocks, we require that they have signed impacts on at least two common endogenous variables. In addition, their impacts on one variable have the same sign, while their impacts on the other variable have opposite signs. More formally, we assume that $\mathcal{S}_0$ satisfies the following assumption:
\begin{assumption}\label{ass:S0} \rm For any $j\neq k,$ $j,k=1,\ldots, m$, there exist $i_1$ and $i_2$ such that $s_{i_1,j,0} = s_{i_1,k,0} \neq 0$ and $s_{i_2,j,0} = -s_{i_2,k,0} \neq 0$.
\end{assumption}

\begin{example}\label{ex:ass1}\rm Consider the following two sets of restrictions $\mathcal{S}_0^1$ and $\mathcal{S}_0^2$:
\[
	\mathcal{S}_0^1 = \begin{pmatrix} +1 & +1 & +1 \\ -1 & 0 & +1 \\ 0 & +1 & -1 \end{pmatrix}, \quad
	\mathcal{S}_0^2 = \begin{pmatrix} +1 & +1 & +1 \\ -1 & +1 & +1 \\ 0 & +1 & -1 \end{pmatrix}.
\]
$\mathcal{S}_0^1$ does not satisfy Assumption 1 because the first and second shocks (corresponding to the first and second columns) have signed impacts on only one common variable (the first variable). Intuitively, there is not enough information to separate these two structural shocks. For instance, a column of $\bR$ with $(1, -2, 3)'$ is consistent with both the first and second structural shocks. In contrast, $\mathcal{S}_0^2 $ satisfies Assumption 1: the first and second shocks have the same signed impact on the first variable but different signed impacts on the second variable; similarly for the first and third shocks; finally, the second and third shocks have the same signed impact on the first variable but different signed impacts on the third variable. Consequently, any column of $\bR$ can be consistent with at most one structural shock.
\end{example}

Next, we describe an efficient way to go through all the elements in $\mathcal{E}(\vSigma,\bQ)$ to locate those that satisfy all the restrictions in $\mathcal{S}_0$. Suppose that we have $n$ endogenous variables and we are interested in $m$ structural shocks. Let $\bT$ denote an $m\times n$ matrix such that $T_{ji}$, the $(j,i)$ element, is $+1$ if the $i$-th column of $\bR = \bL\bQ$ satisfies all the restrictions in $\mathcal{S}_0$ corresponding to $j$-th structural shock. If the negative of the $i$-th column of $\bR$ satisfies all the inequalities in $\mathcal{S}_0$ corresponding to $j$-th structural shock, set $T_{ji} = -1$; otherwise $T_{ji} = 0$. In other words, the $j$-th row of $\bT$ encodes all potential candidates among the columns of $\bR$ that can represent the $j$-th structural shocks (those have entries $\pm 1$). Therefore, if any row of $\bT$ contains all 0, then none of the elements in $\mathcal{E}(\vSigma,\bQ)$ satisfies all the restrictions in $\mathcal{S}_0$. In addition, by Assumption~\ref{ass:S0}, each column of $\bT$ has at most one $+1$ or $-1$---i.e., each column of $\bR$ can satisfy (or violate) all the restrictions of at most one structural shock.

\begin{example}\label{ex:T}\rm Suppose we have a set of restrictions to identify $m=2$ structural shocks using a VAR with $n=4$ variables:
\[
	\mathcal{S}_0^3 = \begin{pmatrix} +1 & +1 \\ +1 & -1 \\ 0 & 0 \\ 0 & 0 \end{pmatrix}.
\]
It is straightforward to verify that $\mathcal{S}_0^3$ satisfies Assumption 1. Further suppose we obtain two draws of $\bR$:
\[
	\bR^1 = \begin{pmatrix} 0.2 & 0.1 & -0.3 & 0.8 \\ 
	0.3 & 0.2 & -0.4 & 0.7 \\ 
	0.1 & -1.1& 1.2  & -0.4\\ 
	1.2 & 0.5& 0.5  & -1.2 \end{pmatrix}, \quad
	\bR^2 = \begin{pmatrix} 0.2 & -0.1 & -0.3 & 0.8 \\ 
	0.3 & 0.2 & -0.4 & 0.7 \\ 
	0.1 & -1.1& 1.2  & -0.4\\ 
	1.2 & 0.5& 0.5  & -1.2 \end{pmatrix}.
\]
The only difference between $\bR^1$ and $\bR^2$ is that their (1,2) elements have different signs.
Then, their corresponding $\bT^1$ and $\bT^2$ are:
\[
	\bT^1 = \begin{pmatrix} +1 & +1 & -1 & +1 \\ 0 & 0 & 0 & 0 \end{pmatrix}, \quad
	\bT^2 = \begin{pmatrix} +1 & 0 & -1 & +1 \\ 0 & -1 & 0 & 0 \end{pmatrix}.
\]
In other words, all columns of $\bR^1$ can potentially represent the first structural shock (e.g., after switching the signs of the elements in the third column), whereas none of the columns are consistent with the second structural shock. For $\bR^2$, the first, third and fourth columns are consistent with the first structural shock, while the second column is consistent with the second  structural shock. Finally, note that since $\mathcal{S}_0^3$ satisfies Assumption 1, each column of $\bT^1$ and $\bT^2$ has at most one $+1$ or $-1$.
\end{example}

It is important to note that to compute the matrix $\bT$, we only need to check each column of $\bR$ to see if all the relevant inequalities are all satisfied, all violated or neither, for each structural shock $j=1,\ldots,m$. Hence, it involves at most checking $mn^2$ inequalities to construct $\bT$, which can be done quickly. 

Let $\mathcal{E}(\vSigma,\bQ,\mathcal{S}_0)$ denote the subset of elements in $\mathcal{E}(\vSigma,\bQ)$ that satisfy all the restrictions in $\mathcal{S}_0$. In other words, $\mathcal{E}(\vSigma,\bQ,\mathcal{S}_0)$ consists of all the permutations and sign switches of the columns of 
$\bR = \bL\bQ$, where $\vSigma = \bL\bL'$, that satisfy all the restrictions in $\mathcal{S}_0$. Hence, any element in $\mathcal{E}(\vSigma,\bQ,\mathcal{S}_0)$ can be written as 
$\bR^* = \bR\bP\bD$ for some $n$-dimensional permutation matrix $\bP$ and diagonal matrix $\bD$ with elements $\pm 1$. And since $\bP$ and $\bD$ are orthogonal matrices, $\bR^* = \bR\bQ^*$ with 
$\bQ^* = \bP \bD \in\bO(n)$. 

\begin{algorithm}[H]
\caption{A new accept-reject algorithm to uniformly draw from the admissible set.}
\label{alg:ar}

\begin{enumerate}
	\item Sample a posterior draw of $(\bA, \vSigma),$ and obtain the lower triangular Cholesky factor $\bL$ of $\vSigma$ such that $\vSigma = \bL\bL'$.
	
	\item Sample $\bQ\sim \distn{U}(\bO(n))$. This can be done by sampling $\bZ=(Z_{ij})$, where 
	$Z_{ij}$ are iid $\distn{N}(0,1)$ random variables, and returning the orthogonal matrix $\bQ$
	from the QR decomposition of $\bZ$.
	
	\item Given $\bL$ and $\bQ$, construct $\bR = \bL\bQ$ and the associated $m\times n$ matrix $\bT$.
		
	\item If any row of $\bT$ contains all 0, then go back to Step 1; otherwise, let $\bR^*$ be an $n\times n$ zero matrix and complete the following steps: 
	\begin{enumerate}	
		\item For $j=1,\ldots, m$, construct the index set $S_j=\{i: T_{ji} = +1  \text { or } T_{ji} = -1, i=1,\ldots, n\} $ and sample an element uniformly from $S_j$, denoted as, $i_j$. 
		If $T_{ji_j} = +1$, set the $j$-th column of $\bR^*$ as the $i_j$-th column of $\bR$;
		if $T_{ji_j} = -1$, set the $j$-th column of $\bR^*$ as the negative of the $i_j$-th column of $\bR$.
		
		\item For $j=m+1,\ldots, n$, let $S_j = \{1,\ldots, n\}\backslash\{i_1,\ldots, i_{j-1}\}$ and sample an element uniformly from $S_j$, denoted as, $i_j$. With probability 1/2, set the $j$-th column of $\bR^*$ as the $i_j$-th column of $\bR$; otherwise set it as the negative of the $i_j$-th column of $\bR$.		
		
	\end{enumerate}	
	
	\item Return $\bA$ and $\bR^*$, which represents the structural shocks that satisfy all the restrictions in $\mathcal{S}_0$.
	
\end{enumerate}
\end{algorithm}

Given the matrix $\bT$, we can first determine whether or not $\mathcal{E}(\vSigma,\bQ,\mathcal{S}_0)$ is empty. If it is not, we then uniformly obtain an element from it as follows. Since each row of $\bT$ contains at least one $+1$ or $-1$, for each $j=1,\ldots, m$, we uniformly pick a column that has entries $+1$ or $-1$, say, $i_j$. And since each column contains at most one $+1$ or $-1$, we would not pick the same column twice. Given the sampled $i_1,\ldots, i_m$, we can reconstruct the element in $\mathcal{E}(\vSigma,\bQ,\mathcal{S}_0)$ that satisfies all the restrictions in $\mathcal{S}_0$. We summarize this algorithm in Algorithm~\ref{alg:ar}. 

%
%More specifically, given $\bQ\sim \distn{U}(\bO(n))$, let $\mathcal{E}(\vSigma,\bQ)$ denote the set
%\begin{align*}
	%\mathcal{E}(\vSigma,\bQ) = \{ \bE&: \bE = \bL\bQ\bP\bD, \text{where } \bL \text{ is the Cholesky factor of } \vSigma, \bP \text{ is an } n\text{-dimensional } \\
	%& \text{permutation matrix and } \bD \text{ is a diagonal matrix with elements } \pm 1 \}.
%\end{align*}
%In other words, $\mathcal{E}(\vSigma,\bQ)$ consists of all the permutations and sign switches of the columns of $\bL\bQ$. Since there are $n!$ permutation matrices of dimension $n$ and $2^n$ ways to construct an $n$ 

\begin{proposition}\label{prop:alg1} \rm  Under Assumption~\ref{ass:S0}, the output $\bR^*$ from Algorithm~\ref{alg:ar} satisfies $\bR^* = \bL\bQ^*$, where $\bQ^*$ is a uniform draw from $\bO(n)$ and $\bL\bL'=\vSigma$. In addition, $\bR^*$ satisfies all the restrictions in $\mathcal{S}_0$.
\end{proposition}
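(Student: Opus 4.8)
The plan is to split the statement into three claims and prove them in increasing order of difficulty: (i) $\bR^*$ has the form $\bL\bQ^*$ with $\bQ^*\in\bO(n)$; (ii) $\bR^*$ satisfies every restriction in $\mathcal{S}_0$; and (iii) $\bQ^*$ is a uniform (Haar) draw, in the sense intended by the proposition. Claims (i) and (ii) follow directly from the construction in Algorithm~\ref{alg:ar}, so I would dispatch them first. For (i), note that Step 4 sets the $j$-th column of $\bR^*$ equal to $d_j\bR_{\cdot,i_j}$ for a sign $d_j\in\{\pm1\}$, where $i_1,\ldots,i_n$ is a permutation of $1,\ldots,n$ (the first $m$ indices being distinct by Assumption~\ref{ass:S0} and the rest drawn without replacement in Step 4(b)). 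Hence $\bR^*=\bR\bP\bD$ for a permutation matrix $\bP$ and a $\pm1$ diagonal $\bD$, and since $\bP\bD\in\bO(n)$ and $\bR=\bL\bQ$, we obtain $\bR^*=\bL\bQ^*$ with $\bQ^*=\bQ\bP\bD\in\bO(n)$. For (ii), the definition of $\bT$ states exactly that $d\,\bR_{\cdot,i}$ obeys the restrictions for shock $j$ whenever $T_{ji}=d\in\{\pm1\}$; since Step 4 proceeds only when no row of $\bT$ vanishes, each candidate set is nonempty and the selected $d_j\bR_{\cdot,i_j}$ satisfies all of shock $j$'s restrictions, while columns $m+1,\ldots,n$ carry no restriction.

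The substance is claim (iii). The structural fact I would establish is that the signed permutation matrices $\bP\bD$ form a finite subgroup $G\subset\bO(n)$ of order $2^n n!$, and that the Haar measure $\mu$ is invariant under right multiplication by any element of $G$. Because every $\bQ\in\bO(n)$ is invertible, the right action of $G$ is free, so the orbit $\bQ G$ consists of exactly $2^n n!$ distinct matrices; invariance of $\mu$ then yields a disintegration of the form ``draw an orbit, then draw a uniform point of that orbit.'' I would next observe that the non-rejection event---equivalently $\mathcal{E}(\vSigma,\bQ,\mathcal{S}_0)\neq\emptyset$, i.e.\ no row of $\bT$ is identically zero---depends on $\bQ$ only through its orbit, since permuting and sign-switching the columns of $\bR$ leaves invariant the collection of columns able to represent each shock.

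With this in place, I would show that, conditional on a fixed orbit, Step 4 returns a point distributed uniformly over $\bQ G\cap\mathcal{Q}$. This is a counting argument resting on Assumption~\ref{ass:S0}, which forces the candidate sets $S_1,\ldots,S_m$ to be disjoint: an admissible orbit member is then pinned down by an assignment $i_j\in S_j$ for each shock (the sign $T_{ji_j}$ being forced), together with an ordering and signing of the remaining $n-m$ columns. Step 4 draws each ingredient uniformly and independently, so it hits every admissible member with the same probability $\big(\prod_{j\leq m}|S_j|\big)^{-1}(n-m)!^{-1}2^{-(n-m)}$. Crucially, this conditional law depends only on the orbit and not on which representative $\bQ$ happened to be drawn, which is what couples the internal randomization to the Haar disintegration.

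The step I expect to be the main obstacle is the passage from ``uniform within each orbit'' to the global conclusion. Combining the previous paragraphs shows that the output selects an orbit with probability proportional to its Haar mass \emph{on the event that the orbit meets $\mathcal{Q}$}, and then draws uniformly over $\bQ G\cap\mathcal{Q}$. Matching this to a Haar measure restricted to the admissible set would require the multiplicity $|\bQ G\cap\mathcal{Q}|=\big(\prod_{j\leq m}|S_j|\big)(n-m)!\,2^{n-m}$ to enter the orbit weight, yet $\prod_{j\leq m}|S_j|$ can vary from orbit to orbit. The delicate point is therefore to make precise the sense in which ``$\bQ^*$ is a uniform draw from $\bO(n)$'' is meant---most naturally, that $\bQ^*$ is a uniformly selected admissible representative of a Haar-distributed orbit---and to verify the stated conclusion holds in exactly that sense. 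I would concentrate the effort here; once the $G$-invariance and the within-orbit counting are set up, the remaining bookkeeping is routine.
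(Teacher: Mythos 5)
Your claims (i) and (ii), together with your within-orbit counting, are exactly the paper's argument made explicit. The paper's proof writes $\bR^*=\bL\bQ\bP\bD$, invokes invariance of the Haar measure under right multiplication by the orthogonal matrices $\bP$ and $\bD$ (citing Muirhead), and then asserts in a single sentence that Step 4 ``uniformly obtains an element of $\mathcal{E}(\vSigma,\bQ,\mathcal{S}_0)$''; your observation that Assumption~\ref{ass:S0} makes the candidate sets $S_1,\ldots,S_m$ disjoint, so that every admissible orbit member is hit with probability $\bigl(\prod_{j\leq m}|S_j|\bigr)^{-1}(n-m)!^{-1}2^{-(n-m)}$, is precisely the justification for that sentence which the paper omits.

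The obstacle you isolate in your last paragraph is genuine, and it is a gap in the paper's own proof rather than a defect of your route. The paper concludes $\bQ^*=\bQ\bP\bD\sim\distn{U}(\bO(n))$ by applying right-invariance as if $(\bP,\bD)$ were fixed, whereas in Algorithm~\ref{alg:ar} the pair $(\bP,\bD)$ is a function of $\bQ$ (through $\bT$) and of the acceptance event, so the invariance argument does not apply as stated. Your accounting of the induced law is correct: conditional on acceptance, the algorithm selects an orbit with probability proportional to its Haar mass (times the indicator that the orbit meets the admissible set) and then a uniform admissible member, so its output has density proportional to $1/\prod_{j\leq m}|S_j|$ with respect to the Haar measure restricted to $\mathcal{Q}(\bA,\vSigma,\mathcal{S}_0)$; this coincides with the target of the accept-reject algorithm of \citet{RWZ10} only if the multiplicity $\prod_{j\leq m}|S_j|$ is almost surely constant on the acceptance event, which need not hold. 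Note also that the proposition's literal wording cannot be meant unconditionally---a Haar draw would not almost surely satisfy $\mathcal{S}_0$---so the statement must be read in the weaker sense you articulate: $\bQ^*$ is a uniformly selected admissible representative of a Haar-distributed orbit. In that sense your proof is complete, and strictly more careful than the published one; what neither you nor the paper supplies is an argument for the strong reading (distributional equivalence with \citet{RWZ10}), which would require either the constant-multiplicity condition or a reweighting correction across orbits.
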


In other words, the proposed algorithm returns structural shocks that satisfy all the restrictions in $\mathcal{S}_0$ using a uniform draw $\bQ^*$ from the orthogonal group $\bO(n)$ such that $\bR^* = \bL\bQ^*$. The proof of the proposition is given in Appendix~A. 

It is worth noting that in Algorithm~\ref{alg:ar} we accept or reject each pair $(\bA,\vSigma)$ and $\bQ$ jointly, in contrast to the common practice of sampling $\bQ$ conditional on a particular draw of $(\bA,\vSigma)$ until all the restrictions in $\mathcal{S}_0$ are satisfied. As pointed out in \citet{ARSW24}, the latter approach has the unintended consequence of sampling from a different distribution other than the target.\footnote{Algorithm 1 does not check whether the admissible set $\mathcal{Q}(\mathbf{A},\vSigma, \mathcal{S})$ is empty or not, but this can be assessed using the procedure based on computing the Chebyshev center as proposed in \citet{AD21}. Since Algorithm 1 accepts or rejects $(\bA,\vSigma)$ and $\bQ$ jointly, if the admissible set $\mathcal{Q}(\mathbf{A},\vSigma, \mathcal{S})$ is empty, $\bT$ will have a zero row for any $\bQ$. Algorithm~\ref{alg:ar} then iterates and samples a different draw for $(\bA,\vSigma)$. In practice, it is much faster to compute $\bT$ than the Chebyshev center of $\mathcal{Q}(\mathbf{A},\vSigma, \mathcal{S})$.} It is also important to emphasize that Algorithm~\ref{alg:ar} returns one draw $\bR^*$, instead of multiple draws, from the set $\mathcal{E}(\vSigma,\bQ,\mathcal{S}_0)$ for each pair of $(\vSigma,\bQ)$. This is because any two draws from $\mathcal{E}(\vSigma,\bQ,\mathcal{S}_0)$ are dependent, as they differ by a permutation and sign switches of the columns. In contrast, given independent posterior draws of $(\bA,\vSigma)$, invoking Algorithm~\ref{alg:ar} multiple times to obtain multiple draws of $\bR^*$ ensures that these draws are independent.\footnote{
One can ensure that the Monte Carlo simulation error is a given fraction of the posterior standard deviation by choosing the number of admissible draws appropriately, and this choice does not depend on the dimension of the problem, provided that the retained admissible draws are independent. Specifically, given $R$ independent admissible draws, the Monte Carlo error of the posterior mean estimate is $1/\sqrt{R}$ as a fraction of the posterior standard deviation. In other words, the Monte Carlo errors of the posterior means obtained from 1,000 independent admissible draws are about 3.2\% of the corresponding posterior standard deviations.}

Assumption 1 is crucial for the implementation of Algorithm~1. Intuitively, its role is to ensure that one can distinguish any two structural shocks via only the responses of the variables at impact. This assumption allows us to use Algorithm~1 to uniformly obtain a draw from the set $\mathcal{E}(\vSigma,\bQ,\mathcal{S}_0)$ without enumerating all the elements in the set. Without this assumption, one would need to enumerate or label all the elements in $\mathcal{E}(\vSigma,\bQ,\mathcal{S}_0)$, which could be time-consuming if the set is large. While Assumption~1 is satisfied in many commonly-used identification schemes (which is vacuously satisfied when only one structural shock is identified), it would not be satisfied, e.g., when two shocks cause exactly the same responses at impact; an example is the identification of the TFP surprises vs TFP news shocks \citep{BP06}. In those cases, Algorithm 1 cannot be directly applied.

Nevertheless, one can still uniformly obtain a draw from $\mathcal{E}(\vSigma,\bQ,\mathcal{S}_0)$ by modifying Algorithm~1. More specifically, we can follow Steps 1-3 of the algorithm and construct the matrix $\bT$. Then, instead of Step~4,  we enumerate all the elements in $\mathcal{E}(\vSigma,\bQ,\mathcal{S}_0)$ and obtain a draw from the set uniformly. Below we provide a concrete example to illustrate the revised algorithm.

\begin{example}\label{ex:vio1}\rm Now consider a set of sign restrictions, $\mathcal{S}_0^4$, that does not satisfy Assumption~1, and a draw of $\bR$, denoted as $\bR^3$:
\[
	\mathcal{S}_0^4 = \begin{pmatrix} +1 & +1 \\ +1 & +1 \\ 0 & 0 \\ 0 & 0 \end{pmatrix}, 
	\quad 
	\bR^3 = \begin{pmatrix} 
	0.2 & -0.1 & 0.3 & 0.8 \\ 
	0.3 & 0.2 & -0.4 & 0.7 \\ 
	0.1 & -1.1& 1.2  & -0.4\\ 
	1.2 & 0.5& 0.5  & -1.2 \end{pmatrix}.
\]
Given $\mathcal{S}_0^4$ and $\bR^3$, the corresponding $\bT^3$ is:
\[
	\bT^3 = \begin{pmatrix} +1 & 0 & 0 & +1 \\ +1 & 0 & 0 & +1 \end{pmatrix}.	
\]
Note that it is no longer the case that each column of $\bT^3$ has at most one $+1$ or $-1$, and consequently, Algorithm 1 cannot be directly applied. Nevertheless, one can still enumerate all the elements in $\mathcal{E}(\vSigma,\bQ,\mathcal{S}_0)$ (i.e., all the permutations and sign switches of the columns of $\bR^3$ that satisfy all the restrictions in $\mathcal{S}_0^4$) using the matrix $\bT^3$. For example, one can use the first column of $\bR^3$ to represent the first structural shock and the fourth column to represent the second structural shock. Another possibility is to represent the first and second structural shocks using, respectively, the fourth and first columns of $\bR^3$. After enumerating all the elements in $\mathcal{E}(\vSigma,\bQ,\mathcal{S}_0)$, one can then uniformly obtain a draw from $\mathcal{E}(\vSigma,\bQ,\mathcal{S}_0)$.

\end{example}

\subsection{Extensions} \label{ss:extensions}

In this section we discuss how the proposed algorithm can be extended to handle some other commonly-used identification schemes. We start with the ranking restrictions of \citet{AD21}. In particular, consider the ranking restriction of the form $s_{i,j,k,l} f_{i,j,0} \geq s_{i,j,k,l} \lambda_{i,j,k,l} f_{k,l,0}$ for $s_{i,j,k,l}\in\{-1,0,1\}$ and $\lambda_{i,j,k,l}\geq 0$, where $f_{i,j,0}$ is the impulse response of the $i$-th variable from the $j$-th structural shock on impact, as defined in \eqref{eq:IR}. 

For example, if $i=k$, $s_{i,j,k,l}=1$ and $\lambda_{i,j,k,l}=1$, then this ranking restriction implies that the impact of the $j$-th structural shock on the $i$-th variable is at least as large as the impact of the $l$-th shock on the same variable. On the other hand, if $j=l$, $s_{i,j,k,l}=1$ and $\lambda_{i,j,k,l}=1$, then this ranking restriction implies that the response of the $i$-th variable to the $j$-th structural shock at least as large as the response of the $k$-th variable to the same shock. Furthermore, it is easy to see that the ranking restriction includes the sign restriction as a special case by setting $ \lambda_{i,j,k,l}=0$. 

Let $\mathcal{R}_0 =\{(s_{i,j,k,l},\lambda_{i,j,k,l}): i,k=1,\ldots, n, j,l=1,\ldots, m\}$ denote the ranking restrictions set on impact. We first consider the case where each ranking restriction involves only an individual structural shock (i.e., for $j\neq l$, $\lambda_{i,j,k,l}=0$); the general case will be discussed afterward. In addition, to ensure that the structural shocks are distinct, we impose some regularity conditions on $\mathcal{R}_0$. Intuitively, to distinguish two structural shocks, we require that either 1) they have signed impacts on at least two common endogenous variables, where on one variable they have the same sign and on the other they have opposite signs; or 2) the impact on a linear combination of two variables from one shock is positive whereas that from the other shock is negative. Formally, we assume $\mathcal{R}_0$ satisfies the following assumption:
\begin{assumption}\label{ass:R0} \rm For any $j\neq l,$ $j,l=1,\ldots, m$, at least one of the following conditions hold:
\begin{enumerate}

\item there exist $i_1$ and $i_2$ such that $s_{i_1,j,k_1,m_1} = s_{i_1,l,k_2,m_2} \neq 0$ and $s_{i_2,j,k_3,m_3} = -s_{i_2,l,k_4,m_4} \neq 0$ for some $k_1,k_2, k_3, k_4, m_1, m_2,m_3,m_4$, with $\lambda_{i_1,j,k_1,m_1} = \lambda_{i_1,l,k_2,m_2} = \lambda_{i_2,j,k_3,m_3} = \lambda_{i_2,l,k_4,m_4} = 0$ ;

% \item {\color{blue}there exists $i_1$ such that $s_{i_1,j,i_1,l} \neq 0$ and $\lambda_{i_1,j,i_1,l} > 0$;} -- Algorithm 1 cannot handle this condition because it would need to check the responses for two shocks
		
\item there exist $i_1$ and $i_2$ such that $s_{i_1,j,i_2,j} = -s_{i_1,l,i_2,l} \neq 0$ and $\lambda_{i_1,j,i_2,j} = \lambda_{i_1,l,i_2,l} > 0$.

\end{enumerate}
\end{assumption}

Condition 1 in Assumption~\ref{ass:R0} is essentially an extension of Assumption~\ref{ass:S0} to the case of ranking restrictions. For example, if $s_{i_1,j,k_1,j} = s_{i_1,l,k_2,l} = 1$ and $s_{i_2,j,k_3,j} = -s_{i_2,l,k_4,l} =1 $, then Condition 1 implies $f_{i_1,j,0}\geq 0 $, $f_{i_1,l,0}\geq 0 $, $f_{i_2,j,0}\geq 0 $, $f_{i_2,l,0}\leq 0 $. Condition 2 discriminates the two structural shocks by their different signed impacts on a linear combination of two variables.  For instance, if $s_{i_1,j,i_2,j} = -s_{i_1,l,i_2,l} = \lambda_{i_1,j,i_2,j} =  \lambda_{i_1,l,i_2,l} = 1$, then Condition 2 implies $f_{i_1,j,0} - f_{i_2,j,0} \geq 0 $ and $f_{i_1,l,0} - f_{i_2,l,0} \leq 0 $.

\begin{example}\label{ex:ass2}\rm Consider an example with $n=m=3$, and the set of ranking restrictions $\mathcal{R}_0$ is characterized by two 4-dimensional arrays or tensors $\mathbb{S} = (s_{i,j,k,l})$ and $\vLambda = (\lambda_{i,j,k,l})$, where the only non-zero elements in $\mathbb{S}$ and $\vLambda$ are:
\begin{align*}
	s_{1,1,1,1} &= 1, s_{1,2,1,1} = 1, s_{2,1,1,1} = 1, s_{2,2,1,1} = -1, \\
	s_{3,1,1,1} &= 1, s_{1,3,1,1} = 1, s_{3,3,1,1} = -1, \\	
	s_{1,1,1,2} &= 1, s_{2,1,2,3} = -1, s_{1,2,3,2} = 1, s_{1,3,3,3} = -1, 
	\lambda_{1,2,3,2} = 1, \lambda_{1,3,3,3} = 1.
\end{align*}
Then, $\mathcal{R}_0$ satisfies Assumption 2. More specifically, to distinguish the first and second structural shocks, note that $s_{1,1,1,1} = s_{1,2,1,1} = s_{2,1,1,1} = -s_{2,2,1,1}=1 $, so that Condition~1 is satisfied with  $i_1 = 1, i_2 = 2, j=1$ and $l=2$ (i.e., the responses of the first and second variables to the first shock are positive, whereas their responses to the second shock are positive and negative, respectively). Similarly, one can distinguish the first and third shocks since $s_{1,1,1,1} = s_{3,1,1,1} = s_{1,3,1,1} = -s_{3,3,1,1} = 1,$ and Condition 1 is satisfied with  $i_1 = 1, i_2 = 3, j=1$ and $l=3$ (i.e., the responses of the first and third variables to the first shock are positive, whereas their responses to the third shock are positive and negative, respectively). Finally, the remaining restrictions ensure that Condition 2 is satisfied with $i_1 = 1, i_2 = 3, j=2$ and $l=3$, and one can distinguish the second and third shocks. In particular, the restrictions imply that the second shock causes a larger response from the first variable than the third variable, whereas the third shock causes a smaller response from the first than the third variables. 
\end{example}

With Assumption~\ref{ass:R0}, we can easily adopt Algorithm~\ref{alg:ar} to obtain draws uniformly from the admissible set $\mathcal{E}(\vSigma,\bQ,\mathcal{R}_0)$. In fact, the only modification one needs is to replace the sign restrictions set $\mathcal{S}_0$ with the ranking restrictions set $\mathcal{R}_0$ in the construction of the matrix $\bT$. More specifically, we construct the $m\times n$ matrix $\bT$ as follows: set $T_{ji} = +1$ if the $i$-th column of $\bR = \bL\bQ$ satisfies all the restrictions in $\mathcal{R}_0$ corresponding to $j$-th structural shock. If the negative of the $i$-th column of $\bR$ satisfies all the inequalities in $\mathcal{S}_0$ corresponding to $j$-th structural shock, set $T_{ji} = -1$; otherwise $T_{ji} = 0$. As before, $\bT$ can be constructed with trivial computations. In addition, by Assumption~\ref{ass:R0}, each column of $\bT$ has at most one $+1$ or $-1$ since each column of $\bR$ can satisfy (or violate) all the restrictions of at most one structural shock. The rest of the steps in Algorithm~\ref{alg:ar} remain exactly the same. 

More generally, ranking restrictions include cases where two different structural shocks are involved. For example, one could impose $f_{i,j,0} \geq \lambda_{i,j,i,l} f_{i,l,0}$ for $ \lambda_{i,j,i,l}>1$, i.e., the response of the $i$-th variable to the $j$-th structural shock is larger than the response from the $l$-th structural shock. This type of restrictions can be accommodated by an extra accept-reject step. Specifically, one can first use Algorithm~\ref{alg:ar} to obtain $\bA$ and $\bR^*$, which represents the structural shocks that satisfy all the ranking restrictions in $\mathcal{R}_0$. If $\bR^*$ also satisfies the additional ranking restrictions, we accept it; otherwise, we obtain another draw of the pair $\bA$ and $\bR^*$ and repeat the procedure. Similarly, this approach can be applied to cases when one wishes to impose sign or ranking restrictions on longer-horizon impulse responses. That is, we use Algorithm~\ref{alg:ar} to obtain $\bA$ and $\bR^*$, and if the implied longer-horizon impulse responses satisfy the required sign and ranking restrictions, we accept it;  otherwise, we sample $\bA$ and $\bR^*$ again until a draw is accepted.

% Next, we discuss the case when $\mathcal{S}_1$ is non-empty. In that case, we can first construct $\mathcal{E}(\vSigma,\bQ,\mathcal{S}_0)$, the set of elements in $\mathcal{E}(\vSigma,\bQ)$ that satisfy all restrictions in $\mathcal{S}_0$. Then, we go through $\mathcal{E}(\vSigma,\bQ,\mathcal{S}_0)$ to find all those elements that also satisfy $\mathcal{S}_1$. 

% *** revise: lend themselves to similar speed-up (like ranking restrictions), but others would not (IRs at higher horizons). But we can always first find all the rotations that satisfy the sign restrictions, and within these draws we further search for those that satisfy additional restrictions. The new method substantially speeds up the first part but not the second. So the new method would only be useful if there are a large number of sign restrictions. 

We have so far focused on the so-called B-model \citep{Lutkepohl05} in which the reduced-form errors $\bu_t$ are linear combinations of the structural shocks $\bv_t$, i.e., $\bu_t = \bB_0\bv_t$, and the researcher has useful prior information on $\bB_0$. The proposed approach can be adapted to handle the case when the researcher has useful prior information on $\bA_0=\bB_0^{-1}$ instead. More specifically, suppose we wish to impose sign restrictions on the elements of $\bA_0$. Note that 
$\bA_0'\bA_0 = \vSigma^{-1}$. Let $\overline{\bL}$ denote the \textit{upper} Cholesky factor of $\vSigma^{-1}$ so that $\vSigma^{-1} = \overline{\bL}'\overline{\bL}.$ It is clear that if we let $\overline{\bR} = \bQ \overline{\bL}$, where  $\bQ$ is an orthogonal matrix, we have $\overline{\bR}'\overline{\bR} =  \vSigma^{-1}$. Therefore, Algorithm~1 can be directly applied to this case by simply replacing $\bL$ and $\bR$ with, respectively, $\overline{\bL}$ and $\overline{\bR}$.

\section{Comparison of Computational Efficiency} \label{s:comparison}

The goal of this section is to demonstrate the empirical performance of the proposed algorithm in a variety of settings compared to other benchmarks such as the algorithms of \citet{RWZ10} and \citet{Read22}. In the first subsection, we consider an empirical example that involves a 15-variable VAR with over 40 sign and ranking restrictions. We compare both the speed and the estimated impulse responses against a benchmark. In the second subsection, we demonstrate how the proposed algorithm can be applied in settings with dynamic sign restrictions by replicating the application in \citet{Uhlig05}. In the last subsection, we further compare the computational efficiency of the proposed approach relative to \citet{RWZ10} along a few model dimensions.

\subsection{An Illustration of a 15-Variable VAR} \label{ss:illustration}

We first illustrate the empirical performance of the proposed algorithm using a 15-variable VAR with over 40 sign and ranking restrictions to identify 5 structural shocks. As a comparison, we also use the algorithm of \citet{RWZ10} to uniformly sample orthogonal matrices from the admissible set and compute the impulse responses.

More specifically, \citet{FRS19} first use a 6-variable VAR to identify 5 structural shocks---demand, supply, monetary, investment and financial shocks---using a set of sign and ranking restrictions on the contemporaneous impact matrix. \citet{chan22} augments their 6-variable system with 9 additional variables and sign restrictions. The variables and the structural restrictions are given in Table~\ref{tab:sign}. All rows except the fourth present the sign restrictions on the contemporaneous impact matrix. The fourth row represents ranking restrictions: the entries denote the signs of the differential impacts on investment and output from each structural shock. For example, $-1$ in the demand column indicates that the impact from demand shocks on investment is smaller than the impact on output. 

It is straightforward to see that this set of sign and ranking restrictions satisfies Assumption~\ref{ass:R0}. More specifically, supply and monetary shocks can be distinguished from other shocks using Condition 1 in Assumption~\ref{ass:R0}. In addition, demand shocks have a negative impact on the difference between investment and output, whereas the impacts from investment and financial shocks are positive. Hence, demand shocks can be distinguished from the other two shocks using Condition 2.

\begin{table}[H]
\centering
\caption{Sign restrictions, ranking restrictions and identified shocks for the 15-variable VAR.} \label{tab:sign}
\begin{tabular}{lccccc}
\hline\hline
	&	Supply	&	Demand	&	Monetary	&	Investment	&	Financial	\\ \hline
	GDP	&	$+1$	&	$+1$	&	$+1$	&	$+1$	&	$+1$	\\
	\rowcolor{lightgray}
	GDP deflator	&	$-1$	&	$+1$	&	$+1$	&	$+1$	&	$+1$	\\
	3-month tbill rate	&	0	&	$+1$	&	$-1$	&	$+1$	&	$+1$	\\
	\rowcolor{lightgray}
	Investment/GDP	&	0	&	$-1$	&	0	&	$+1$	&	$+1$	\\
	S\&P 500	&	$+1$	&	0	&	0	&	$-1$	&	$+1$	\\
	\rowcolor{lightgray}
	Spread	&	0	&	0	&	0	&	0	&	0	\\
	Spread 2	&	0	&	0	&	0	&	0	&	0	\\
	\rowcolor{lightgray}
	Credit/Real estate value	&	0	&	0	&	0	&	0	&	0	\\
	Mortgage rates	&	0	&	0	&	0	&	0	&	0	\\
	\rowcolor{lightgray}
	CPI	&	$-1$	&	$+1$	&	$+1$	&	$+1$	&	$+1$	\\
	PCE	&	$-1$	&	$+1$	&	$+1$	&	$+1$	&	$+1$	\\
	\rowcolor{lightgray}
	employment	&	0	&	0	&	0	&	0	&	0	\\
	Industrial production	&	$+1$	&	$+1$	&	$+1$	&	$+1$	&	$+1$	\\
	\rowcolor{lightgray}
	1-year tbill rate	&	0	&	$+1$	&	$-1$	&	$+1$	&	$+1$	\\
	DJIA	&	$+1$	&	0	&	0	&	$-1$	&	$+1$	\\ \hline\hline
\end{tabular}
{\raggedright \footnotesize{Note: All restrictions are imposed on the response of a particular variable, except for investment/GDP, in which restrictions are imposed on linear inequalities of two responses.} \par}
\end{table}

As a benchmark, we use the algorithm of \citet{RWZ10} to uniformly sample orthogonal matrices in conjunction with the posterior sampler of \citet{chan22} designed for large VARs to obtain posterior draws of the model parameters. This approach requires approximately 3.6 billion draws from $\distn{U}(\bO(n))$ to obtain 1,000 admissible draws, and the estimation takes about 6 days on a standard desktop. In contrast, the new algorithm requires only about 31,000 draws from $\distn{U}(\bO(n))$ to obtain 1,000 admissible draws, and the entire exercise takes about 16 seconds.\footnote{In this illustration there are far more variables than identified structural shocks. One might wonder how the proposed algorithm behaves in applications when the difference between the numbers of variables and structural shocks is small, as there are fewer `free' columns to potentially represent structural shocks. It is expected in those cases that the proposal algorithm will take more candidate draws to obtain one admissible draw. However, the speed-up relative to the benchmark remains the same---the cardinality of $\mathcal{E}(\vSigma, \bQ)$ is still $2^n n\!$. As an example, we replicate the original application in \citet{FRS19} that uses a 6-variable VAR to identify 5 structural shocks and one `residual' shock. The benchmark requires over 9 million candidate posterior draws and takes 4 minutes to obtain 1,000 admissible draws, whereas the proposed algorithm requires only 2 seconds and takes 21,000 candidate posterior draws.}

Next, we empirically verify that the impulse responses obtained from the two algorithms are the same. In particular, Figure \ref{fig:IRF_finc_15} reports the impulse responses of 6 variables to a one-standard-deviation financial shock, obtained using the algorithm of \citet{RWZ10}, and Figure~\ref{fig:IRF_finc_15_newalg} reports those from the proposed algorithm. As expected, the impulse responses obtained using the two algorithms are identical. Thus, these results highlight the utility of the proposed algorithm: it provides the same impulse responses but is several orders of magnitude more efficient than the benchmark.

\begin{figure}[H]
    \centering
   \includegraphics[width=.75\textwidth]{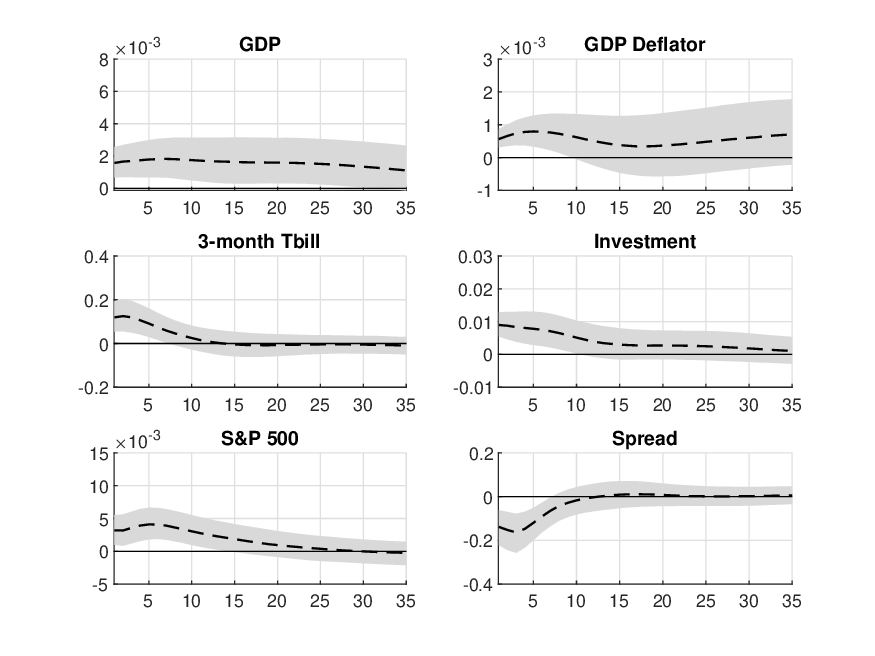}
   \caption{Impulse responses from a 15-variable VAR to a one-standard-deviation financial shock, obtained using the algorithm of \citet{RWZ10}.}
   \label{fig:IRF_finc_15}	
\end{figure}

\begin{figure}[H]
    \centering
   \includegraphics[width=.75\textwidth]{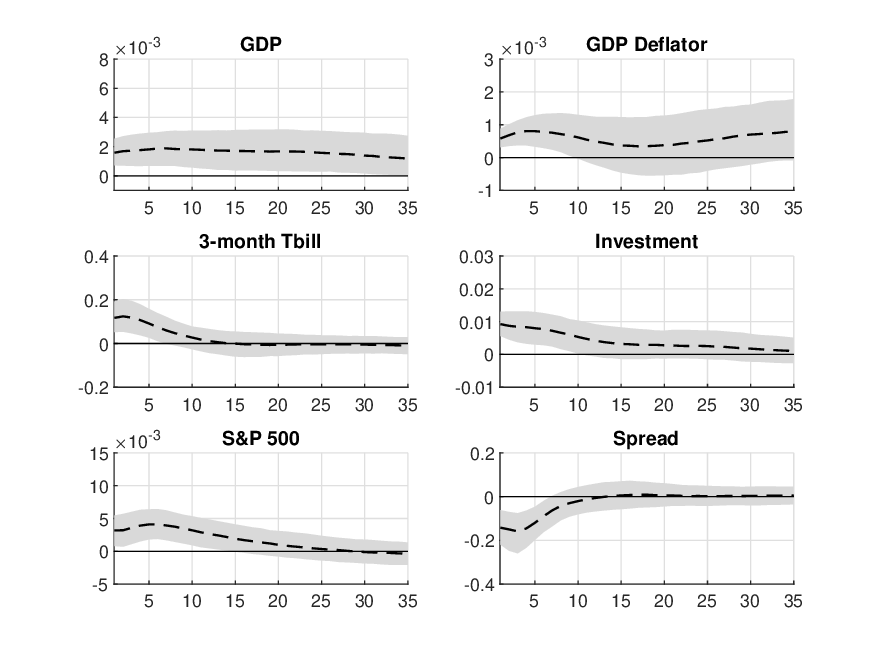}
   \caption{Impulse responses from a 15-variable VAR to a one-standard-deviation financial shock, obtained using the proposed algorithm described in Algorithm~\ref{alg:ar}.}
   \label{fig:IRF_finc_15_newalg}	
\end{figure}

\subsection{Dynamic Sign Restrictions: Replication of \citet{Uhlig05}}

Next, we demonstrate how the proposed algorithm can be applied in settings with both static and dynamic sign restrictions. More specifically, we consider the application in \citet{Uhlig05} that uses a 6-variable VAR to identify the monetary policy shock. The 6 monthly variables are industrial production, CPI, a commodity index (S\&P GSCI), total reserves, nonborrowed reserves and the effective federal funds rate. All variables are sourced from the FRED database maintained by the Federal Reserve Bank of St. Louis, except for the commodity index, which is obtained from S\&P Capital IQ. The federal funds rate is not transformed, whereas all other variables are in log. The sample period is from December 1969 to December 2007. 

To identify the monetary policy shock, we follow \citet{Uhlig05} and assume that the responses of prices and nonborrowed reserves are nonpositive and the responses of federal funds rate are nonnegative on impact and for the first 5 months. To implement the proposed approach to this setting with dynamic restrictions, we first use Algorithm 1 to obtain admissible draws that satisfy the required sign restrictions on impact. We then check if the impulse responses also satisfy the required sign restrictions for the first 5 months. If so, we accept the draw; otherwise, we obtain another admissible draw using Algorithm~1 and repeat the procedure until a draw is accepted. As a comparison, we also implement the algorithms of \citet{RWZ10} and \citet{Read22}.\footnote{We adapt the code provided by \citet{Read22} to our setting. In particular, we run his Algorithm 2 to draw uniformly the first column of $\bQ$ given one set of posterior draws, after using Algorithm 1 to check whether the admissible set is empty.} Table~\ref{tab:comparison} reports the computation time to obtain 5,000 posterior draws using the three algorithms.

% *** something to add: also compare to the algorithm in \citet{Read22}. Maybe add a little table to document the computation time of the 3 algorithms. Add a two by three panel to illustrate that all the impulse responses are the same across the algorithms.

\begin{table}[H]
	\centering 
	\caption{The computation time (in minutes) to obtain 5,000 posterior draws using the proposed method and the algorithms of \citet{RWZ10} (RWZ) and \citet{Read22} (Read).}
	\label{tab:comparison}
\begin{tabular}{lccc}
\hline \hline
       & Proposed method & RWZ & Read  \\ \hline
Time   & 1.05 & 14.27 & 3.07            \\ \hline\hline
\end{tabular}
\end{table}

To get 5,000 admissible draws that satisfy all the static and dynamic sign restrictions, the approach of \citet{RWZ10} takes about 14 minutes, whereas the algorithm of \citet{Read22} takes about three minutes. In contrast, the proposed algorithm is substantially faster than both approaches and takes only one minute. To verify that all three algorithms generate from the same target distribution, we compute the dynamic responses of the 6 variables to the identified monetary policy shock. The impulse-response functions are provided in Appendix B. The three methods give virtually identical impulse responses; they are also similar to those reported in \citet{Uhlig05}.

\subsection{Computational Efficiency in High-Dimensional Settings} \label{ss:MC}

The purpose of this subsection is to compare the computational efficiency of the proposed algorithm relative to \citet{RWZ10} in high-dimensional setting using simulated data. More specifically, we generate datasets with different numbers of variables ($n=10,30,50)$ and structural shocks ($m=5,8$), while fixing the sample size $T=200$ and lag length $p=5$ for all simulations.

For each $(n,m)$ combination, we generate a dataset from the VAR in \eqref{eq:VAR}-\eqref{eq:shock} as follows. First, we draw the intercepts independently from the uniform distribution on the interval $(-1,1)$, i.e., $\distn{U}(-1, 1)$. For the VAR coefficients, the diagonal elements of the first VAR coefficient matrix are iid $\distn{U}(0,0.5)$ and the off-diagonal elements are from $\distn{U}(-0.2,0.2)$; all other elements of the $j$-th ($j > 1$) VAR coefficient matrices are iid $\distn{N}(0,0.1^2/j^2).$ Finally, to construct the impact matrix $\bB_0$, we first draw the diagonal elements from iid $\distn{U}(0.5,1.5)$, and the off-diagonal elements from iid $\distn{N}(0,1)$. We then store them and change the signs of the elements in $\bB_0$ to match the set of restrictions specified in each case. 

Given a dataset, we then estimate the model using the proposed algorithm and the benchmark, together with the direct posterior sampler of \citet{chan22} designed for large VARs. Each algorithm is run for 10,000 seconds, and we record the numbers of posterior draws and admissible draws (i.e., those posterior draws that satisfy all the structural restrictions). The results are reported in Table~\ref{tab:postdraws}. The top panel refers to the case where only sign restrictions are used (and the set of sign restrictions satisfies Assumption 1); the middle panel considers the case where three additional ranking restrictions are added; lastly, the bottom panel considers the case with both static sign restrictions 
(as in the top panel) and dynamic sign restrictions (on the first shock at horizon $H=1$).

The results show that as the number of sign restrictions increases, the number of admissible draws obtained---given a fixed number of candidate draws---decreases for both algorithms. This is not surprising because the computational bottleneck of obtaining admissible draws lies in the fact that the set $\mathcal{Q}(\mathbf{A},\vSigma, \mathcal{S})$ becomes thinner when more restrictions are imposed. In many cases with large $n$ or large $m$, the sampling efficiency of the benchmark deteriorates so quickly that it becomes infeasible. In contrast, the proposed method remains capable of obtaining a large number of admissible draws for large $n$ and $m$ in a reasonable amount of time. Finally, the bottom panel shows that both algorithms slow down considerably when dynamic restrictions are added. Nevertheless, the proposed algorithm can still generate a reasonable number of admissible draws in such hard cases.

\begin{table}[H]
\centering 
\caption{Numbers of posterior draws (in millions) and admissible draws obtained for an $n$-variable VAR with $m$ shocks within 10,000 seconds using the proposed method and the algorithm of \citet{RWZ10} (RWZ).} 
\label{tab:postdraws}
\begin{tabular}{lllrrr}
\hline \hline
\multicolumn{6}{l}{Top panel: sign restrictions only}  \\  
        & &  & $n=10$               & $n=30$               & $n=50$              \\ 
        $m=5$          &                   & $\#$ restrictions & 25              & 35              & 40      \\ \cline{3-6} 
      & RWZ         & Posterior draws ($\times 10^6$)  & 240   & 35  & 14 \\
      &             & Admissible draws   & 1,033   & 0    & 0  \\
      & Proposed method  & Posterior draws ($\times 10^6$)  & 12 & 12 & 9      \\
      &             & Admissible draws   & 489,030 & 166,590 & 14,099 \\ \cline{2-6} 
$m=8$          &                   & $\#$ restrictions  & 40              & 50              & 60             \\ \cline{3-6} 
           & RWZ               & Posterior draws ($\times 10^6$)  & 232  & 36  & 14\\
           &                   & Admissible draws & 0  & 0  & 0    \\
           &  Proposed method        & Posterior draws ($\times 10^6$)  & 34 & 14 & 7     \\
           &                   & Admissible draws & 266,280 & 57,804 & 2,460 \\ \hline
\multicolumn{6}{l}{Middle panel: 3 additional ranking restrictions} \\ 
&  & & $n=10$               & $n=30$               & $n=50$          \\
$m=5$ &  & $\#$ restrictions    & 28              & 38              & 43             \\ \cline{3-6} 
           & RWZ               & Posterior draws ($\times 10^6$)  & 258  & 34  & 13\\
           &                   & Admissible draws & 115  & 0  & 0 \\
           &  Proposed method  & Posterior draws ($\times 10^6$)  & 47  & 18 & 8\\
           &                   & Admissible draws & 310,970 & 14,230 & 1,214  \\ \cline{2-6} 
$m=8$      &                   & $\#$ restrictions    & 43              & 53              & 63             \\ \cline{3-6} 
           & RWZ               & Posterior draws ($\times 10^6$)   & 260 & 34 & 14 \\
           &                   & Admissible draws & 0 & 0 & 0 \\
           & Proposed method   & Posterior draws ($\times 10^6$)   & 37  & 13 & 6 \\
           &                   & Admissible draws & 99,607 & 2,525 & 1,000 \\ \hline
\multicolumn{6}{l}{Bottom panel: additional dynamic restrictions for the first shock} \\ 
&  & & $n=10$               & $n=30$               & $n=50$          \\
$m=5$ &  & $\#$ restrictions    & 29              & 31              & 45             \\ \cline{3-6} 
           & RWZ               & Posterior draws ($\times 10^6$)  & 81  & 13  & 3\\
           &                   & Admissible draws & 375  & 0  & 0 \\
           &  Proposed method  & Posterior draws ($\times 10^6$)  & 15  & 4 & 2\\
           &                   & Admissible draws & 229,920 & 733 & 316  \\ \cline{2-6} 
$m=8$      &                   & $\#$ restrictions    & 45              & 55              & 66             \\ \cline{3-6} 
           & RWZ               & Posterior draws ($\times 10^6$)   & 67 & 14 & 3 \\
           &                   & Admissible draws & 0 & 0 & 0 \\
           & Proposed method   & Posterior draws ($\times 10^6$)   & 5  & 3 & 2 \\
           &                   & Admissible draws & 2499 & 416 & 247 \\
\hline\hline
\end{tabular}
\end{table}

%However, it is also clear that the proposed method is much more efficient compared to the benchmark. Furthermore, when $n$ or $m$ grows, the differences in performance between the two methods become more apparent. In fact, 

\section{A 35-Variable VAR of the US Economy} \label{s:application}

To showcase the usefulness of the proposed algorithm, we consider an application that involves a 35-variable VAR with sign and ranking restrictions to identify 8 structural shocks, namely, demand, investment, financial, monetary policy, government spending, technology, labor supply and wage bargaining. The list includes many standard macroeconomic and financial variables, such as national accounts variables, various inflation indexes and interest rates, labor market variables, oil and stock prices. These variables are broadly similar to those used in \citet{Crumpetal21} and are closely monitored by the Federal Reserve Staff and professional forecasters.

There are several reasons in favor of using a large set of macroeconomic and financial variables in structural analysis. First, a large system provides a convenient and unified framework to investigate the impacts of multiple structural shocks simultaneously. In particular, it allows the researcher to tease out the impacts of different structural shocks---such as different types of demand and supply shocks---and their individual contributions to macroeconomic fluctuations. 

Second, it mitigates the concern of informational deficiency of using a limited information set, as pointed out in a series of influential papers by \citet{HS91} and \citet{LR93, LR94}. By using a larger set of relevant variables, one can close the gap between the set of variables considered by the economic agent and that considered by the econometrician, thus alleviating the concern of non-fundamentalness \citep[see, e.g.,][for a recent review]{gambetti21}. 

Third, as argued in \citet{LMW22}, the mapping from variables in an economic model to the data is typically not unique. For example, one could match the economic variable `inflation' to data based on the CPI, PCE, or the GDP deflator. One natural way to avoid an arbitrary choice is to include multiple data series corresponding to the same economic variable in the analysis.

The list of variables and the structural restrictions are given in Table~\ref{tab:ranking}.\footnote{Most variables are transformed by taking logs and multiplying 100, while others such as interest rates and unemployment rates are not transformed and are in percentages.} The top part of the table lists the sign restrictions whereas the lower part lists the ranking restrictions. For example, the row labeled `Government spending/GDP' lists the signs of the differences in impacts on government spending and GDP from each structural shock. In particular, the $+1$ in the government spending column indicates that the impact from government spending shocks on government spending is larger than the impact on GDP. It can be easily verified that the set of restrictions in Table~\ref{tab:ranking} satisfies Assumption~\ref{ass:R0}.

\begin{table}[H]
\centering
\caption{Sign restrictions, ranking restrictions and identified shocks for the 35-variable VAR.} \label{tab:ranking}
\resizebox{\textwidth}{!}{\begin{tabular}{lcccccccc}
\hline\hline
Sign restrictions	&	Demand 	&	Investment	&	Financial	&	Monetary	&	Government	&	Technology	&	Labor &	Wage \\ 
	&		&		&		&		&	spending	&	&	supply	&	bargaining	\\ \hline
GDP	&	$+1$	&	$+1$	&	$+1$	&	$-1$	&	$+1$	&	$+1$	&	$+1$	&	$+1$	\\
\rowcolor{lightgray}
Personal consumption expenditure	&	0	&	0	&	0	&	0	&	0	&	$+1$	&	0	&	0	\\
Residential investment	&	0	&	0	&	0	&	0	&	0	&	0	&	0	&	0	\\
\rowcolor{lightgray}
Nonresidential investment	&	0	&	0	&	0	&	0	&	0	&	$+1$	&	0	&	0	\\
Exports	&	0	&	0	&	0	&	0	&	0	&	0	&	0	&	0	\\
\rowcolor{lightgray}
Imports	&	0	&	0	&	0	&	0	&	0	&	0	&	0	&	0	\\
Government spending	&	0	&	0	&	0	&	0	&	$+1$	&	0	&	0	&	0	\\
\rowcolor{lightgray}
Federal budget surplus/deficit	&	0	&	0	&	0	&	0	&	$-1$	&	0	&	0	&	0	\\
Federal tax receipts	&	0	&	0	&	0	&	0	&	$+1$	&	0	&	0	&	0	\\
\rowcolor{lightgray}
GDP deflator	&	$+1$	&	$+1$	&	$+1$	&	$-1$	&	$+1$	&	$-1$	&	$-1$	&	$-1$	\\
PCE index	&	$+1$	&	$+1$	&	$+1$	&	$-1$	&	$+1$	&	$-1$	&	$-1$	&	$-1$	\\
\rowcolor{lightgray}
PCE index less food \& energy	&	$+1$	&	$+1$	&	$+1$	&	$-1$	&	$+1$	&	$-1$	&	$-1$	&	$-1$	\\
CPI index	&	$+1$	&	$+1$	&	$+1$	&	$-1$	&	$+1$	&	$-1$	&	$-1$	&	$-1$	\\
\rowcolor{lightgray}
CPI index less food \& energy	&	$+1$	&	$+1$	&	$+1$	&	$-1$	&	$+1$	&	$-1$	&	$-1$	&	$-1$	\\
Hourly wage	&	0	&	0	&	0	&	0	&	0	&	$+1$	&	$-1$	&	$-1$	\\
\rowcolor{lightgray}
Labor productivity	&	0	&	0	&	0	&	0	&	0	&	$+1$	&	0	&	0	\\
Utilization-adjusted TFP	&	0	&	0	&	0	&	0	&	0	&	$+1$	&	0	&	0	\\
\rowcolor{lightgray}
Employment	&	0	&	0	&	0	&	$-1$	&	0	&	0	&	0	&	0	\\
Unemployment rate	&	$-1$	&	$-1$	&	$-1$	&	$+1$	&	$-1$	&	$-1$	&	$+1$	&	$-1$	\\
\rowcolor{lightgray}
Industrial production index	&	$+1$	&	$+1$	&	$+1$	&	$-1$	&	0	&	0	&	0	&	0	\\
Capacity utilization	&	$+1$	&	$+1$	&	$+1$	&	$-1$	&	0	&	0	&	0	&	0	\\
\rowcolor{lightgray}
Housing starts	&	0	&	0	&	0	&	0	&	0	&	0	&	0	&	0	\\
Disposable income	&	0	&	0	&	0	&	0	&	0	&	0	&	0	&	0	\\
\rowcolor{lightgray}
Consumer sentiment	&	0	&	0	&	0	&	0	&	0	&	0	&	0	&	0	\\
Fed funds rate	&	$+1$	&	$+1$	&	$+1$	&	$+1$	&	$+1$	&	0	&	0	&	0	\\
\rowcolor{lightgray}
3-month tbill rate	&	$+1$	&	$+1$	&	$+1$	&	$+1$	&	$+1$	&	0	&	0	&	0	\\
2-year tnote rate	&	0	&	0	&	0	&	$+1$	&	0	&	0	&	0	&	0	\\
\rowcolor{lightgray}
5-year tnote rate	&	0	&	0	&	0	&	$+1$	&	0	&	0	&	0	&	0	\\
10-year tnote rate	&	0	&	0	&	0	&	$+1$	&	0	&	0	&	0	&	0	\\
\rowcolor{lightgray}
Prime rate	&	$+1$	&	$+1$	&	$+1$	&	$+1$	&	$+1$	&	0	&	0	&	0	\\
Aaa corporate bond yield	&	0	&	0	&	0	&	$+1$	&	0	&	0	&	0	&	0	\\
\rowcolor{lightgray}
Baa corporate bond yield	&	0	&	0	&	0	&	$+1$	&	0	&	0	&	0	&	0	\\
Trade-weighted US\$ index	&	0	&	0	&	0	&	0	&	0	&	0	&	0	&	0	\\
\rowcolor{lightgray}
S\&P 500	&	0	&	$-1$	&	$+1$	&	$-1$	&	0	&	0	&	0	&	0	\\
Spot oil price	&	0	&	0	&	0	&	0	&	0	&	0	&	0	&	0	\\ \hline
Ranking restrictions	&		&		&		&		&		&		&		&		\\ \hline
Nonresidential investment/GDP	&	$-1$	&	$+1$	&	$+1$	&	0	&	0	&	0	&	0	&	0	\\
\rowcolor{lightgray}
Government spending/GDP	&	$-1$	&	$-1$	&	$-1$	&	0	&	$+1$	&	0	&	0	&	0	\\
 \hline\hline
\end{tabular}}
\end{table}

For large-$n$ systems it becomes necessary to regularize the large number of VAR coefficients (e.g., a 35-variable VAR with 5 lags has 6,125 VAR coefficients). Here we use the asymmetric conjugate prior and the direct sampling approach proposed in \citet{chan22} to obtain posterior draws from the 35-variable VAR. Unlike the conventional natural conjugate prior, which does not permit cross-variable shrinkage (i.e., asymmetric shrinkage of own lags and lags of other variables), this new prior is more flexible and can be used, e.g., to shrink coefficients on the other variables’ lags more aggressively to zero. At the same time, this new prior maintains many useful analytical results of the traditional natural conjugate prior, such as a closed-form expression of the marginal likelihood, and allows direct, independent sampling from the posterior distribution instead of using MCMC methods. The key step in formulating this new prior is to reparameterize the VAR in a recursive structural form with a diagonal error covariance matrix. Then, one can show that if the reduced-form error covariance matrix has a standard inverse-Wishart prior, then the implied prior on the structural-form impact matrix and error variances is a product of normal-inverse-gamma densities, which is conjugate for the likelihood.

Using this asymmetric conjugate prior, we obtain the values of the optimal shrinkage hyperparameters on the VAR coefficients by maximizing the marginal likelihood of the model. Then, we use Algorithm~\ref{alg:ar} to obtain 1,000 admissible draws that satisfy all the sign and ranking restrictions. For this 35-variable VAR with over 100 sign and ranking restrictions, the entire exercise takes about 14 minutes and requires 557,000 draws from $\distn{U}(\bO(n))$. 

Figures \ref{fig:IRF_35_demand}--\ref{fig:IRF_35_financial}	report the impulse responses of 6 selected variables to the (one-standard-deviation) demand, investment and financial shocks. As expected, these demand-type structural shocks raise output, short-term interest rate and inflation, while lowering both unemployment rate and real wage, at least in the short-run. Compared to the generic demand shock, both investment and financial shocks have a more substantive impact on nonresidential investment.

\begin{figure}[H]
    \centering
   \includegraphics[width=.8\textwidth]{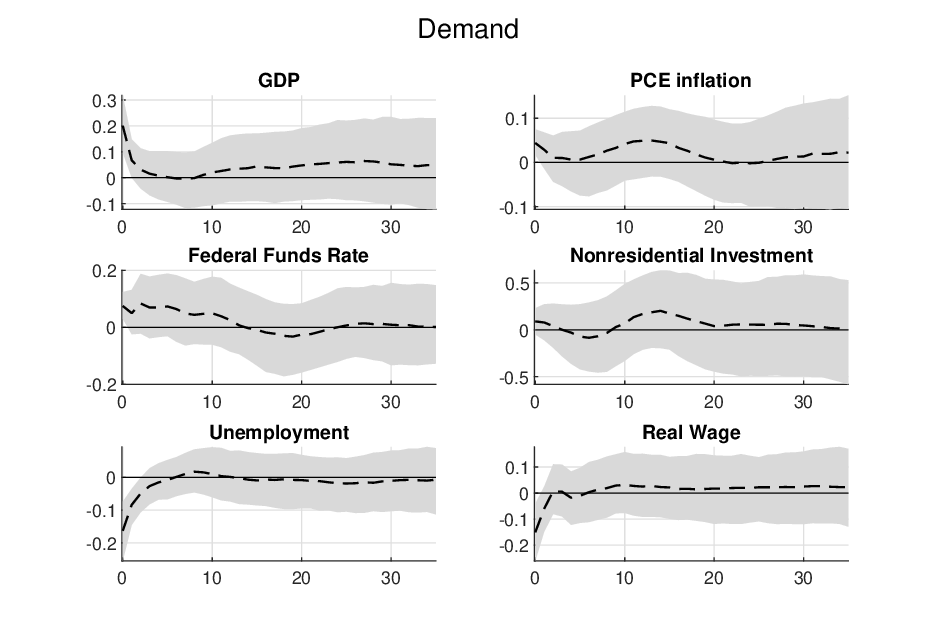}
   \caption{Impulse responses from a 35-variable VAR to a one-standard-deviation demand shock.}
   \label{fig:IRF_35_demand}	
\end{figure}

\begin{figure}[H]
    \centering
   \includegraphics[width=.8\textwidth]{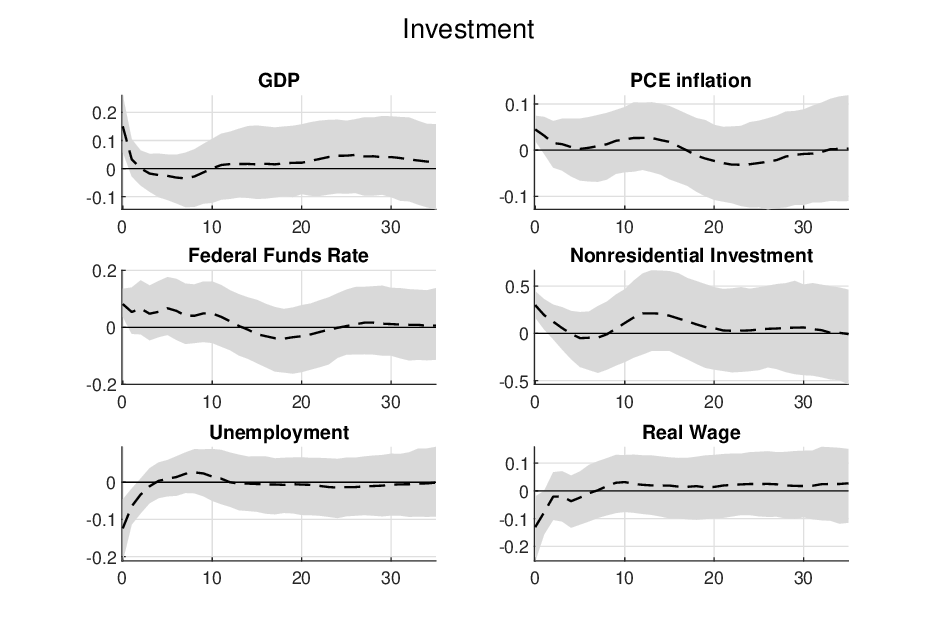}
   \caption{Impulse responses from a 35-variable VAR to a one-standard-deviation investment shock.}
   \label{fig:IRF_35_investment}	
\end{figure}

\begin{figure}[H]
    \centering
   \includegraphics[width=.8\textwidth]{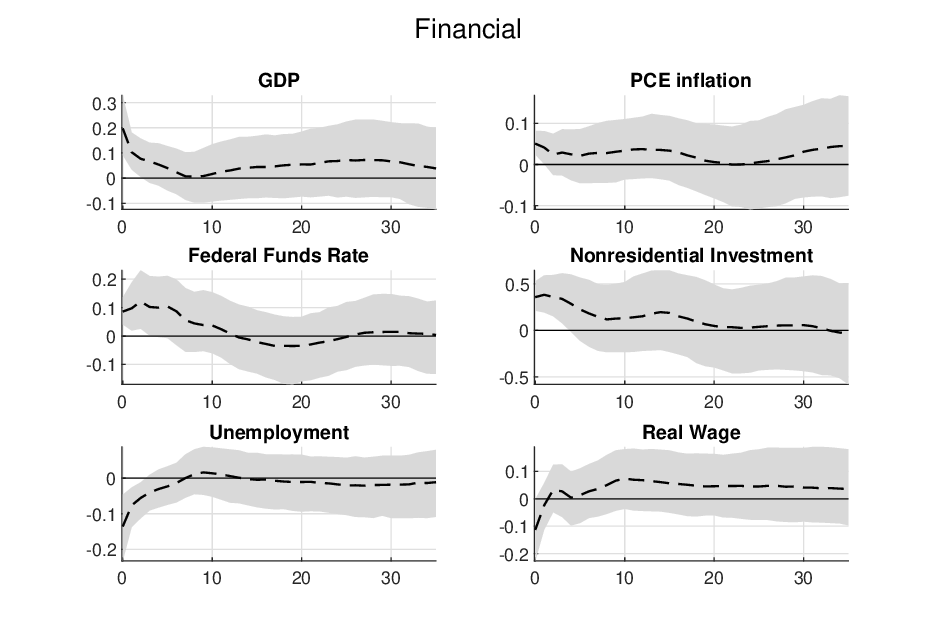}
   \caption{Impulse responses from a 35-variable VAR to a one-standard-deviation financial shock.}
   \label{fig:IRF_35_financial}	
\end{figure}

Compared to the results from the 15-variable VAR described in Section~\ref{ss:illustration}, responses of the overlapping variables in the 35-variable VAR tend to be smaller in magnitude, but are otherwise very similar. For example, Figure~\ref{fig:compare_35VARand15VAR_v1} reports the impulse responses of 6 selected variables from both VARs to a financial shock.\footnote{To obtain comparable results, we restrict the sample for the 35-variable VAR to the period from 1985Q1 to 2019Q4, so that it is the same as the setup in the 15-variable VAR.} Consistent with earlier results reported in \citet{FRS19} and \citet{chan22}, we find a relatively large effect on GDP and a smaller impact on prices; and both the responses of investment and stock prices are persistent. However, with more variables and sign restrictions to pin down the structural shocks, the credible bands from the 35-variable VAR tend to be narrower (more results are provided in Appendix B).

\begin{figure}[H]
	\begin{center}
		\includegraphics[height=8cm]{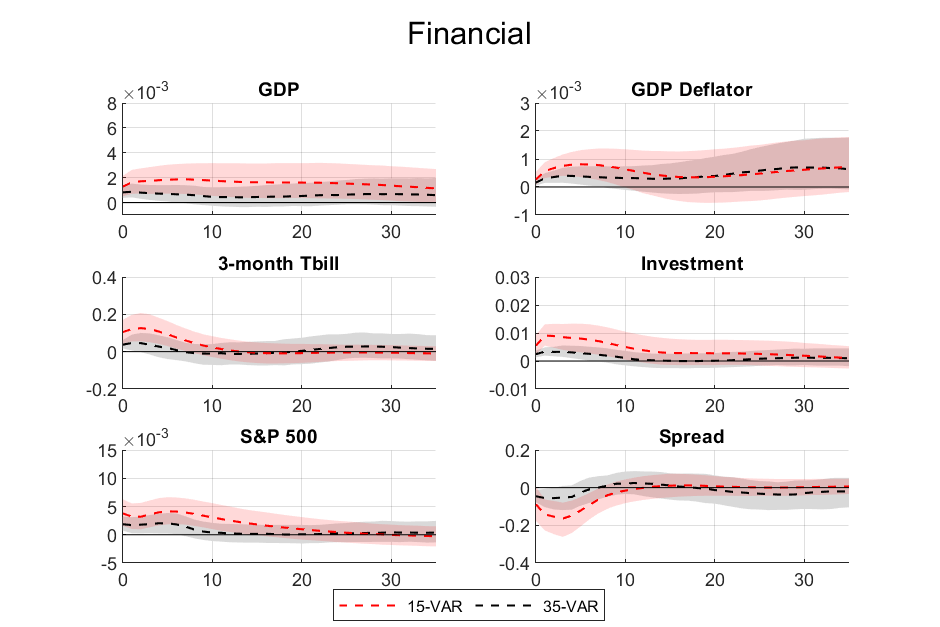}
	\end{center}
	\caption{Impulse responses of 6 selected variables to a one-standard-deviation financial shock from the 15- and the 35-variable VARs. %The shaded regions represent the 68\% credible intervals.
	}
	\label{fig:compare_35VARand15VAR_v1}
\end{figure}

Next, Figures \ref{fig:IRF_35_monetary} and \ref{fig:IRF_35_govt} plot the impulse responses of the same variables to the monetary policy shock and the  government spending shock. A contractionary monetary policy shock depresses output and inflation, while raising the unemployment rate and the real wage. In contrast, an expansionary government spending shock mostly raises inflation and short-term interest rate, and has negligible effects on output, unemployment or the real wage.

\begin{figure}[H]
    \centering
   \includegraphics[width=.8\textwidth]{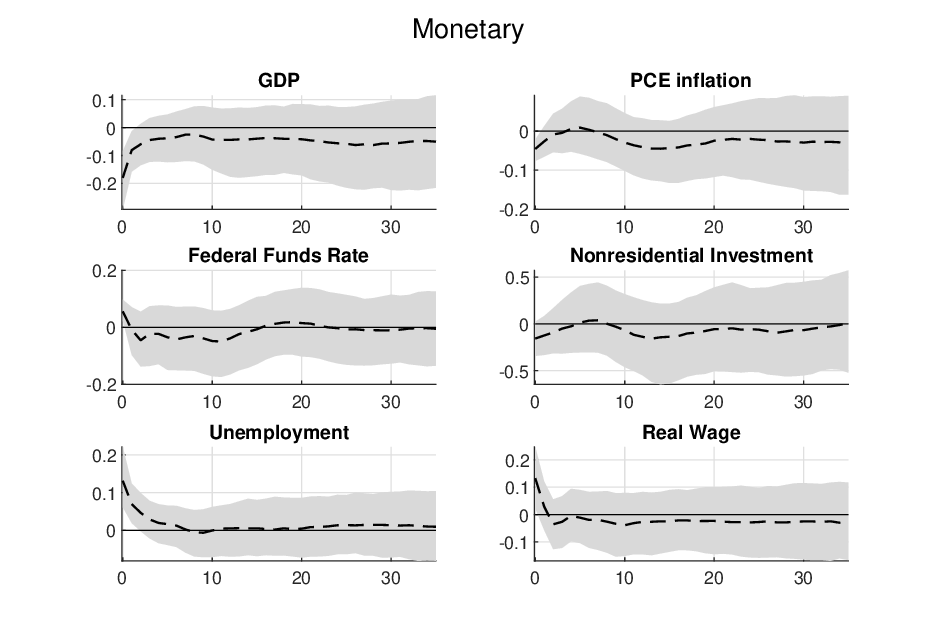}
   \caption{Impulse responses from a 35-variable VAR to a one-standard-deviation monetary policy shock.}
   \label{fig:IRF_35_monetary}	
\end{figure}

\begin{figure}[H]
    \centering
   \includegraphics[width=.8\textwidth]{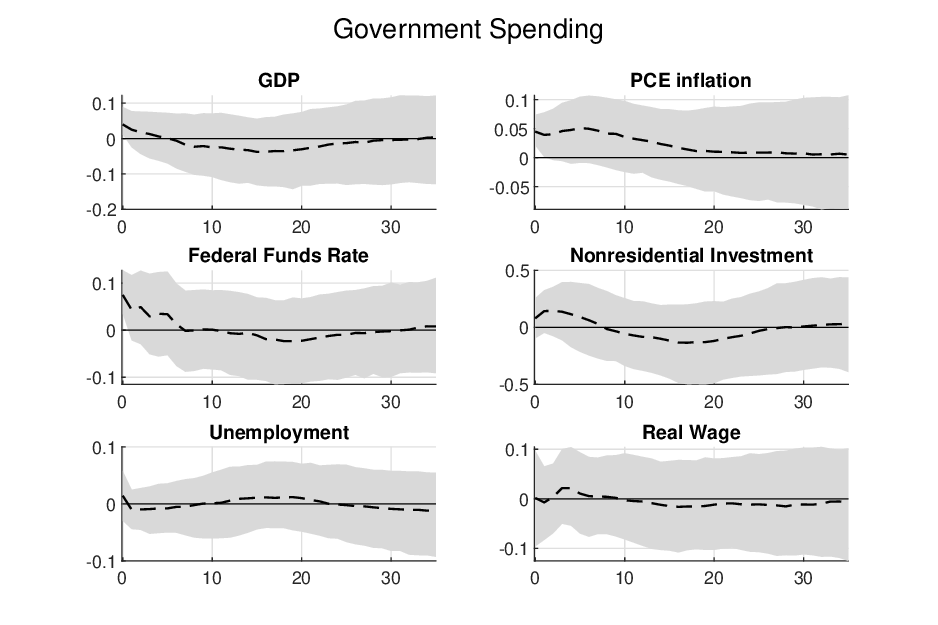}
   \caption{Impulse responses from a 35-variable VAR to a one-standard-deviation government spending shock.}
   \label{fig:IRF_35_govt}	
\end{figure}

\begin{figure}[H]
    \centering
   \includegraphics[width=.8\textwidth]{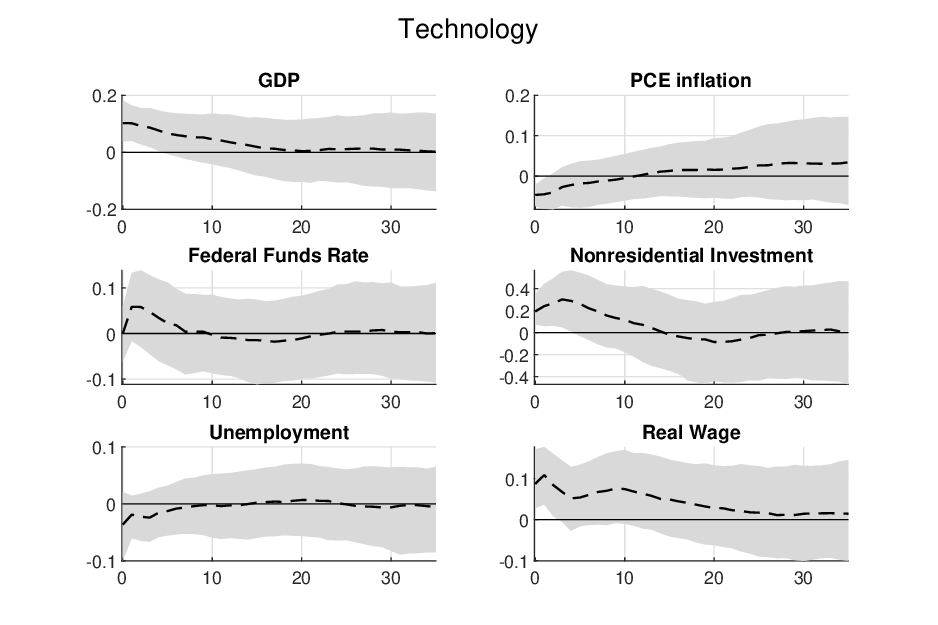}
   \caption{Impulse responses from a 35-variable VAR to a one-standard-deviation technology shock.}
   \label{fig:IRF_35_technology}	
\end{figure}

\begin{figure}[H]
    \centering
   \includegraphics[width=.8\textwidth]{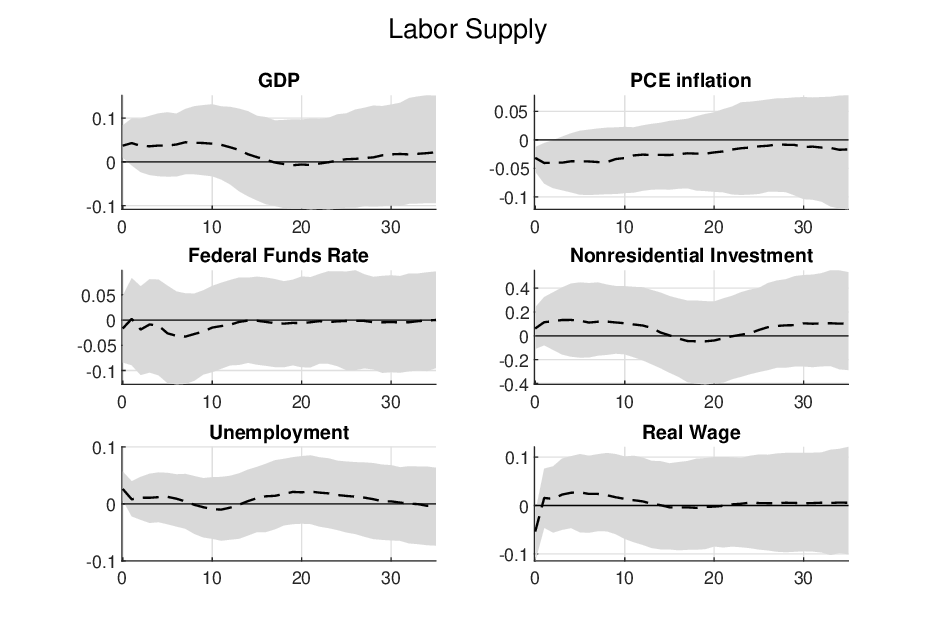}
   \caption{Impulse responses from a 35-variable VAR to a one-standard-deviation labor supply shock.}
   \label{fig:IRF_35_laborsupply}	
\end{figure}

\begin{figure}[H]
    \centering
   \includegraphics[width=.8\textwidth]{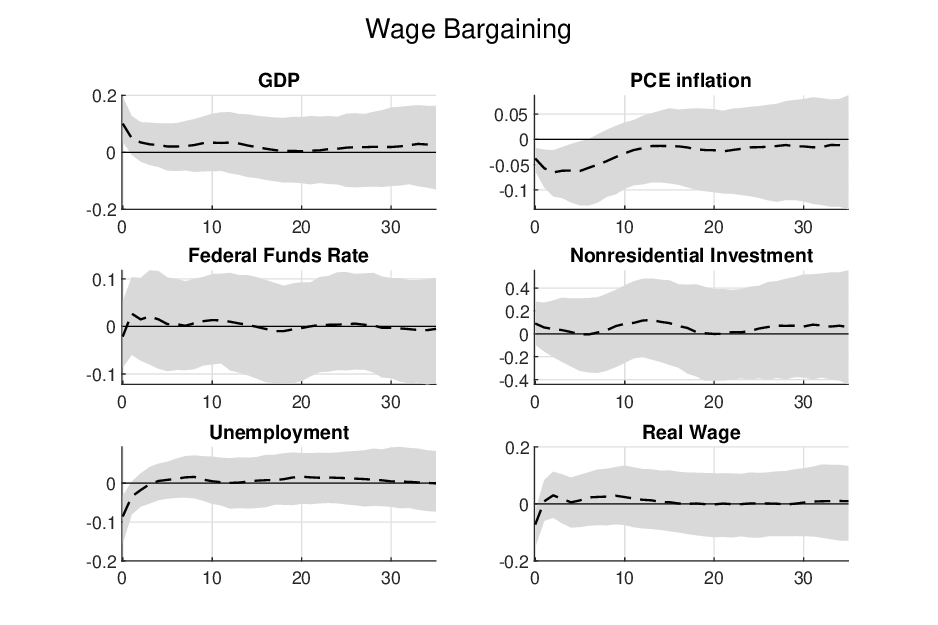}
   \caption{Impulse responses from a 35-variable VAR to a one-standard-deviation wage bargaining shock.}
   \label{fig:IRF_35_wagebargaining}	
\end{figure}

Finally, Figures \ref{fig:IRF_35_technology}--\ref{fig:IRF_35_wagebargaining}	report the impulse responses of the 6 variables to the 3 supply-type structural shocks: technology, labor supply and wage bargaining shocks. While all 3 supply-type structural shocks raise output and depress inflation, the technology shock has the largest impact on these two variables. In addition, the technology shock substantially increases real wage over a relatively long horizon, whereas the other two structural shocks have transient and negligible impacts on real wage.

Overall, this application demonstrates that it is practical to study the impacts of multiple structural shocks jointly in a large VAR. Using a large number of sign and ranking restrictions to identify different structural shocks, we are able to disentangle their differential effects on key macroeconomic variables.

\section{Concluding Remarks and Future Research} \label{s:conclusion}

Two recent developments have motivated our paper: the recognition of the need to include a large number of variables in structural analysis and the desire to use more credible structural restrictions to identify structural shocks. In response to these developments, we have introduced an efficient approach for estimating large VARs identified using a large number of sign and ranking restrictions on the impulse responses. We showed that the new approach is several orders of magnitude more efficient than the benchmark, reducing the computational time from days to seconds. We illustrated the methodology using a 35-variable VAR with sign and ranking restrictions to identify 8 structural shocks.

For future research, it would be useful to extend the proposed algorithms to impose both inequality and zero restrictions \citep{ARRW18}, where the latter may arise in proxy VARs \citep{CH19}. One possibility is to incorporate the Gibbs sampling approach developed in \citet{Read22} to draw the orthogonal matrix directly subject to zero and dynamic sign restrictions. Specifically, in higher dimensional settings, his algorithm could be used to draw the columns of $\bQ$ that are subject to zero or dynamic restrictions, whereas the proposed algorithm could then be applied to other columns that are not subject to zero or dynamic restrictions. It would also be interesting to incorporate richer prior information on the impact matrix or, more generally, impulse responses, as advocated in \citet{BH15} and \citet{BP23}.

\newpage 

\section*{Appendix A: Proof of Proposition}

In this appendix we provide a proof of the proposition stated in the main text.

\begin{proof}[Proof of Proposition \ref{prop:alg1}] Let $\bL$ denote the lower triangular Cholesky factor of $\vSigma$ such that $\vSigma = \bL\bL'$, and sample $\bQ$ uniformly from the orthogonal group $\bO(n)$, i.e., $\bQ \sim \distn{U}(\bO(n))$. Recall that $\mathcal{E}(\vSigma,\bQ)$ consists of all the permutations and sign switches of the columns of $\bL\bQ$. That is, an element $\bE \in \mathcal{E}(\vSigma,\bQ)$ can be represented as $\bE = \bL\bQ\bP\bD,$ where $\bP$ is an $n$-dimensional permutation matrix and $\bD$ is a diagonal matrix with elements $\pm 1$. Since the Haar measure is invariant under right multiplication of $\bP$ and $\bD$ \citep[see, e.g.,][Section 2.1.4]{muirhead}, $\bQ\bP\bD$ is a uniform draw from the orthogonal group $\bO(n)$.  Next, recall that $\mathcal{E}(\vSigma,\bQ,\mathcal{S}_0)$ denotes the (finite) subset of elements in $\mathcal{E}(\vSigma,\bQ)$ that satisfy all restrictions in $\mathcal{S}_0$. Step 4 of Algorithm~\ref{prop:alg1} uniformly obtains an element $\bR^*$ in $\mathcal{E}(\vSigma,\bQ,\mathcal{S}_0)$, which can be represented as $\bR^* = \bL\bQ\bP\bD$ for some permutation matrix $\bP$ and diagonal matrix $\bD$ with elements $\pm 1$. Hence, $\bQ^* = \bQ\bP\bD \sim \distn{U}(\bO(n))$ and $\bR^* = \bL\bQ^*$. Finally, since $\bR^*$ is an element in $\mathcal{E}(\vSigma,\bQ,\mathcal{S}_0)$, it satisfies all the restrictions in $\mathcal{S}_0$.
\end{proof}

\newpage 

\section*{Appendix B: Additional Empirical Results}

This appendix reports additional empirical results from the replication exercise of \citet{Uhlig05} and the empirical application. In particular, Figures~\ref{fig:Uhlig_proposed}-\ref{fig:Uhlig_Read}  report the dynamic responses of the 6 variables to a monetary policy shock identified using static and dynamic restrictions, compared using the proposed method and the algorithms of \citet{RWZ10} and \citet{Read22}. 

\begin{figure}[H]
	\begin{center}
		\includegraphics[height=8cm]{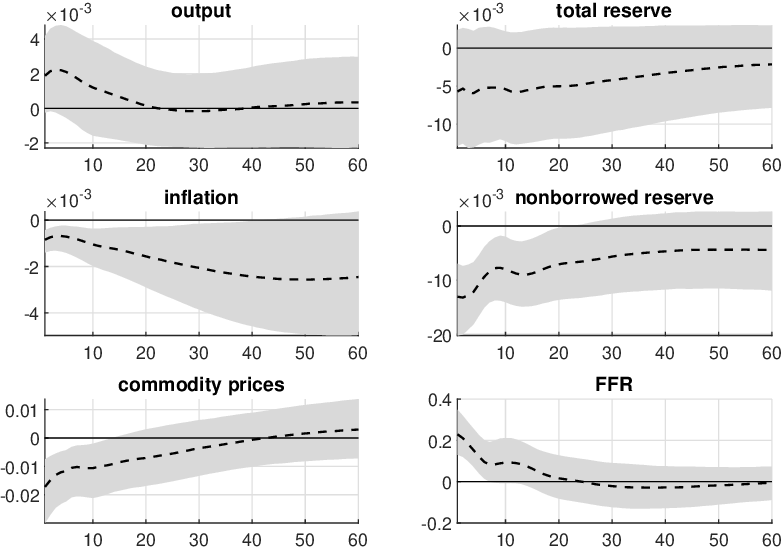}
	\end{center}
	\caption{Impulse responses to a monetary policy shock, using the proposed algorithm.}
	\label{fig:Uhlig_proposed}
\end{figure}

\begin{figure}[H]
	\begin{center}
		\includegraphics[height=8cm]{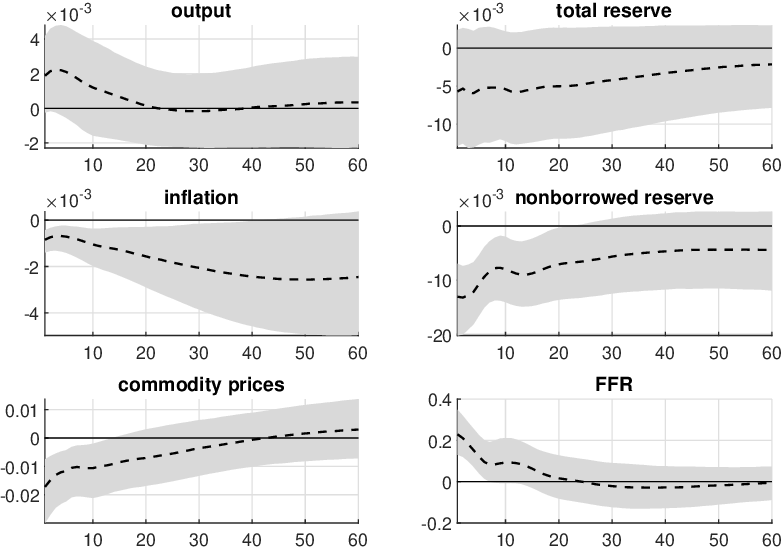}
	\end{center}
	\caption{Impulse responses to a monetary policy shock, using the algorithm of \citet{RWZ10}.}	
\end{figure}

\begin{figure}[H]
	\begin{center}
		\includegraphics[height=8cm]{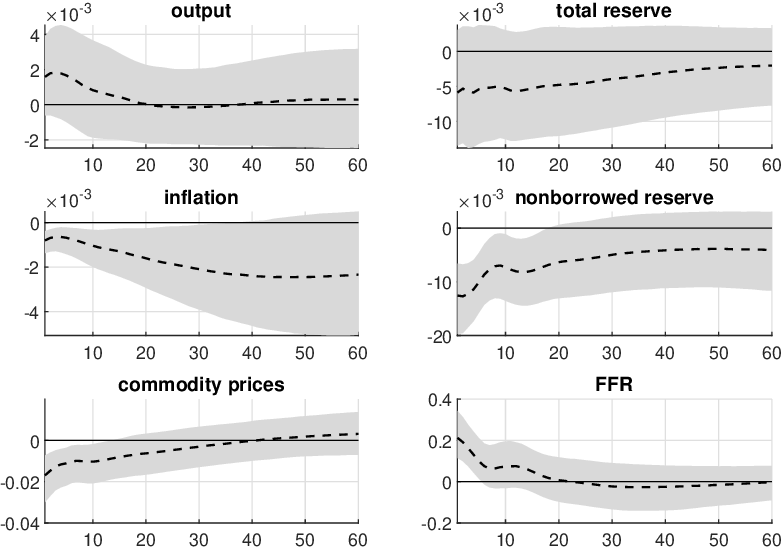}
	\end{center}
	\caption{Impulse responses to a monetary policy shock, using the algorithm of \citet{Read22}.}
	\label{fig:Uhlig_Read}
\end{figure}

Next, we report additional results from the empirical application that involves a 35-variable VAR. In particular, we compute the widths of 68\% credible bands of impulse responses to  a financial shock from the 35-variable VAR relative to those from the 15-variable VAR. The results are reported in Figure~\ref{fig:compare_35VARand15VAR_lengthratio}.
 
\begin{figure}[H]
	\begin{center}
		\includegraphics[height=8cm]{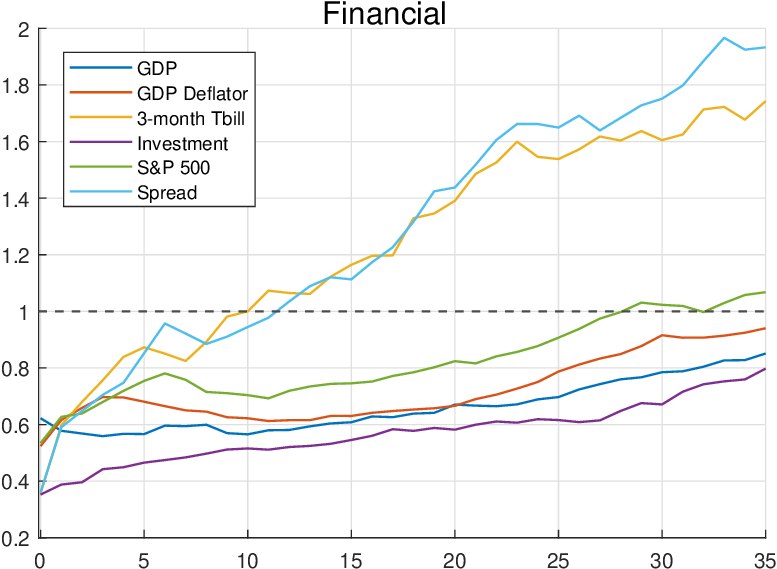}
	\end{center}
	\caption{Relative width of impulse-response credible bands between a 35-variable VAR and a 15-variable VAR under a financial shock. }
	\label{fig:compare_35VARand15VAR_lengthratio}
\end{figure}

It is clear from the plots that the credible bands from the 35-variable VAR tend to be narrower. For instance, for GDP, GDP deflator and investment, the credible bands from the 35-variable VAR are all smaller than those from the 15-variable VAR for all horizons. % For S\&P 500, the bands for 35-variable VAR are all smaller except a few large horizons. For 3-month Tbill and Spread, the bands are smaller before horizon 10. 

\newpage

\ifx\undefined\BySame
\newcommand{\BySame}{\leavevmode\rule[.5ex]{3em}{.5pt}\ }
\fi
\ifx\undefined\textsc
\newcommand{\textsc}[1]{{\sc #1}}
\newcommand{\emph}[1]{{\em #1\/}}
\let\tmpsmall\small
\renewcommand{\small}{\tmpsmall\sc}
\fi

\end{document}